\theoremstyle{plain}
\newtheorem{lemma}{Lemma}[section]
\newtheorem{theorem}[lemma]{Theorem}
\newtheorem{corollary}[lemma]{Corollary}
\newtheorem{proposition}[lemma]{Proposition}
\theoremstyle{definition}
\newtheorem{definition}[lemma]{Definition}
\newtheorem{assumption}[lemma]{Assumption}
\theoremstyle{remark}
\newtheorem{remark}[lemma]{Remark}
\newtheorem{example}[lemma]{Example}
\title{Stability of local quantum dissipative systems}
\author[1,2]{Toby S.\ Cubitt%
  \thanks{tsc25@cam.ac.uk}}
\author[1]{Angelo Lucia%
  \thanks{anlucia@ucm.es}}
\author[3]{Spyridon Michalakis%
  \thanks{spiros@caltech.edu}}
\author[1]{David Perez-Garcia%
  \thanks{dperezga@ucm.es}}
\affil[1]{Departamento de An\'alisis Matem\'atico, Universidad Complutense
  de Madrid, \authorcr 28040 Madrid, Spain}
\affil[2]{DAMTP, University of Cambridge, Centre for Mathematical Sciences, \authorcr
	Cambridge CB3 0WA, United Kingdom}
\affil[3]{Institute for Quantum Information and Matter, Caltech, Pasadena, CA 91125, USA}
\date{}
\begin{document}
\maketitle
\begin{abstract}
Open quantum systems weakly coupled to the environment are modeled by completely positive, trace preserving semigroups of linear maps. The generators of such evolutions are called Lindbladians.
In the setting of quantum many-body systems on a lattice it is natural
to consider Lindbladians that decompose into a sum of local interactions
with decreasing strength with respect to the size of their support.
For both practical and theoretical reasons, it is crucial to estimate the impact that perturbations in the generating Lindbladian, arising as noise or errors, can have on the evolution. These local perturbations are
potentially unbounded, but constrained to respect the underlying
lattice structure.
We show that even for polynomially decaying errors in the Lindbladian, local observables and correlation functions are stable if the unperturbed Lindbladian has a unique fixed point and a mixing time which scales logarithmically with the system size.
The proof relies on Lieb-Robinson bounds, which
describe a finite group velocity for propagation of information in local systems.
As a main example, we prove that classical Glauber dynamics is stable under local perturbations, including perturbations in the transition rates which may not preserve \emph{detailed balance}.

\end{abstract}

\tableofcontents

\section{Background and previous work}

The physical properties of a closed many-body quantum system are encoded in its Hamiltonian. Theoretical models of such systems typically assume some form of local structure, whereby the Hamiltonian decomposes into a sum over interactions between subsets of nearby particles. Similarly, the behavior of an open many-body quantum system is encoded in its Liouvillian. Again, this is typically assumed to have a local structure, decomposing into a sum over local Liouvillians acting on subsets of nearby particles.

Crucial to justifying such theoretical models is the question of whether their physical properties are stable under small perturbations to the local interactions. If the physical properties of a many-body Hamiltonian or Liouvillian depend sensitively on the precise mathematical form of those local terms, then it is difficult to conclude anything about physical systems, whose interactions will always deviate somewhat from theory.

Quantum information theory has motivated another perspective on many-body Hamiltonians. Rather than studying models of naturally occurring systems, it studies how many-body systems can be engineered to produce desirable behavior, such as long-term storage of information in quantum memories \cite{ToricCode4D,Haah11,RMP.82.1041,Maurer08062012}, processing of quantum information for quantum computing \cite{Kitaev20032,RMP.80.1083,RMP.82.1209,briegel2009measurement,Farhi20042001}, or simulation of other quantum systems which are computationally intractable by classical means \cite{bloch2012quantum,blatt2012quantum,aspuru2012photonic,houck2012chip,jordan2012quantum}. Again, stability of these systems under local perturbations is crucial, otherwise even tiny imperfections may destroy the desired properties. Stability in this context has been studied for self-correcting topological quantum memories, where one in addition requires robustness against local sources of \emph{dissipative} noise, and the relevant quantity is the minimum time needed to introduce logical errors in the system. It has been known since \cite{ToricCode4D,Toric4D} that a self-correcting quantum memory with local interactions is possible in four spatial dimensions. With the breakthrough of the Haah code \cite{Haah11}, it seems it may be possible to engineer such self-correcting quantum memories in three dimensions.

Recently, and partially motivated by the dissipative nature of noise, this ``engineering'' approach has been extended to open quantum systems and many-body Liouvillians. First theoretical \cite{Kraus08,verstraete09}, and then experimental \cite{PRL.107.080503,2010NatPh6.943B} work has shown that creating many-body quantum states as fixed points of engineered, dissipative Markovian evolutions can be more robust against undesirable errors and maintain coherence of quantum information for longer times. Intuitively, there is an inherent robustness in such models: the target state is independent of the initial state. If the dissipation is engineered perfectly, the system will always be driven back towards the desired state. This idea can be used to engineer dissipative systems both for storing quantum information \cite{Kraus08,verstraete09} and for carrying out computation via dissipative dynamics \cite{verstraete09}. However, it does not guarantee stability against errors \emph{in the engineered Liouvillian itself}. Once again, stability against local perturbations -- this time for many-body Liouvillians rather than Hamiltonians -- is of crucial importance.

In the case of closed systems governed by Hamiltonians, recent breakthroughs have given rigorous mathematical justification to our intuition that the physical properties of many-body Hamiltonians are stable. Starting with \cite{Klich:2010,Bravyi-Hastings-Spiros}, it culminated in the work of \cite{Spiros11} which showed that, under a set of mathematically well-defined and physically reasonable conditions, gapped many-body Hamiltonians are stable under perturbations to the local interactions.\footnote{Note that, in stark contrast to traditional perturbation theory, the perturbations considered here simultaneously change \emph{all} the local interactions by a small amount. The strength of the total perturbation therefore scales with system size and standard perturbation theory does not apply. It is the structure of local ground states of the Hamiltonian that ensures stability.} More precisely, in the presence of \emph{frustration-freeness}, \emph{local topological quantum order}, and \emph{local gap}, the spectral gap of a Hamiltonian with local (or quasi-local) interactions is stable against small (quasi-) local perturbations (see \cite{Spiros11} for a formal definition of these conditions). The bound on the amount of imperfection tolerated by the system depends on the decay of the local gaps, the decay of the local topological order, and the strength (and decay rate) of the interactions. Furthermore, except for frustration freeness which is a technical condition required in the proof, these conditions are in a sense tight. There exist simple counterexamples to stability if any one of the conditions is lifted.

\section{Stability of open quantum systems}

In this work, we study stability of many-body Liouvillians. We consider dynamics generated by rapidly decaying interactions, where the notion of rapid decay is made precise in section~\ref{sec:notation}. Moreover, in order to have a well-defined notion of scaling with system size, we restrict to Liouvillians whose local terms depend only on the subsystem on which they act,
and thus are not redefined as we consider larger systems. We call such families of Liouvillians \emph{uniform}.

Our main result shows that, under the above assumptions on the structure of the Liouvillian, logarithmic mixing time implies the desired stability in the dissipative setting.

However, although the result is analogous to \cite{Spiros11}, the proof and even the definition of stability in the case of Liouvillians necessarily differ substantially from the Hamiltonian case. For Hamiltonians, the relevant issue is stability of the spectral gap. Via the quasi-adiabatic technique \cite{quasi-adiabatic-1,quasi-adiabatic-2}, this in turn implies a smooth transition between the initial and perturbed ground states, showing that both are within the same phase. Note that the existence of a smooth transition (no closing of the spectral gap in the thermodynamic limit) does not imply that both ground states are close in norm, as the simple example $H = \sum_{i=1}^N {\proj{0}}_i$ vs.\ $H(\epsilon) = \sum_{i=1}^N (\ket{0}+\epsilon\ket{1})(\bra{0}+\epsilon\bra{1})_i/(1+\epsilon^2)$ shows.\footnote{Note that each Hamiltonian is the sum of non-interacting projections for any $\epsilon \in \R$. In particular, for each $\epsilon$, there is a unitary $U(\epsilon)$ acting on a single site, such that $H(\epsilon) = U(\epsilon)^{\otimes N} H U^{\dagger}(\epsilon)^{\otimes N}$.} It does however imply a well-behaved perturbation in the expectation value of local observables -- such as order parameters -- and correlation functions, which in most experimental situations are the only measurable quantities.

For Liouvillians, we are interested in a definition of stability more related to the evolution itself, which accounts at the same time for both the speed of convergence and the properties of the fixed point. Here, we consider the strongest definition of stability: we want our systems (initial and perturbed) to evolve similarly for all times and all possible initial states. Thus, not only should the speed of convergence to the fixed points be similar, the fixed points themselves should be close and so should the approach to the fixed points.

This definition is significantly stronger than stability of the spectral gap alone,\footnote{Due to the recent work in \cite{spectralboundsquantum}, it is not clear whether the spectral gap in Liouvillians is the relevant quantity for convergence questions.} and is more directly relevant to the applications discussed above. As in the Hamiltonian case, the analogous simple example shows that one cannot expect to attain such stability if we consider global measurements on the system. Therefore, in analogy with the Hamiltonian case, restrict our attention to local observables and few-body correlation functions. Since there are technical subtleties involved in extending this stronger definition of stability to dynamics with multiple fixed points, we defer consideration of multiple fixed points to a future paper, and restrict our attention here to dissipative dynamics with unique fixed points. It is important to note, however, that we do not make \emph{any} assumption on the form of the unique fixed point. In particular, we do not assume that it is full-rank (primitivity); our results apply equally well to Liouvillians with pure fixed points. (Pure-state fixed points are particularly relevant to quantum information applications, such as dissipative state engineering and computation.)

A key technical ingredient in the stability proof for Hamiltonians is the quasi-adiabatic evolution technique \cite{quasi-adiabatic-1,quasi-adiabatic-2}, which directly uses the fact that Hamiltonian evolution is reversible. This is of course no longer true for Liouvillians, so we must use a different proof approach. We make use of the fact that evolution under a Liouvillian converges to a steady-state, together with dissipative generalizations \cite{Nachtergaele12} of the Lieb-Robinson bounds that are the other crucial ingredient in \cite{Spiros11}.

Among systems which satisfy our assumptions, one finds classical Glauber dynamics \cite{Martinelli97}. This immediately shows that Glauber dynamics is stable against errors. To the best of our knowledge, this is new even to the classical literature (related results, but with different assumptions, were given in \cite{MR814713}). Given the importance of Glauber dynamics to sampling from the thermal distributions of classical spin systems \cite{Liggett85,Martinelli97}, we expect our results to have applications also to classical statistical mechanics.

The paper is structured as follows: After setting up notation and basic definitions in the next section, we state our main stability result in section~\ref{sec:main-result} and discuss the assumptions it requires. In section~\ref{sec:toolbox} we prove various technical results used in the main proof, which is given in section~\ref{sec:proof}. We apply these results in section~\ref{sec:glauber-dynamics} to the important example of classical Glauber dynamics, before concluding with a discussion of the results and related open questions in section~\ref{sec:conclusions}.


\section{Setup and notation}\label{sec:notation}
We will consider a cubic lattice\footnote{We restrict to cubic lattices for the sake of exposition. The results can be extended to more general settings, replacing the lattice $\Z^D$ with a graph with polynomial growth.} $\Gamma = \Z^D$. The ball centered at $x \in \Lambda$ of radius $r$ will be denoted by $b_x(r)$. At each site $x$ of the lattice we will associate one elementary quantum system with a finite dimensional Hilbert space $\mathcal H_x$. We will use the Dirac notation for
vectors: $\ket{\phi}$ will denote a vector in $\mathcal H_x$, $\bra{\phi}$ its adjoint, and  $\{ \ket{n} \}_{n=0}^{\dim \mathcal H_x}$ the canonical basis for $\mathcal H_x$. Scalar product in $\mathcal H_x$ will be denoted by $\braket{\phi}{\psi}$, and rank-one linear maps by $\ketbra{}{\phi}{\psi}$.
For each finite subset $\Lambda \subseteq \Gamma$, the associated Hilbert space is given by
\begin{equation}
\label{eq:hilbert-space}
 \mathcal H_\Lambda = \bigotimes_{x \in \Lambda} \mathcal H_x ,
\end{equation}
and the algebra of observables supported on $\Lambda$ is defined by
\[ \mathcal A_\Lambda = \bigotimes_{x \in \Lambda} \mathcal B(\mathcal H_x) .\]
If $\Lambda_1\subset \Lambda_2$, there is a natural inclusion of $\mathcal A_{\Lambda_1}$ in $\mathcal A_{\Lambda_2}$ by identifying it with $\mathcal A_{\Lambda_1}\otimes \identity$. The support of an observable $O \in \mathcal A_\Lambda$ is the minimal set $\Lambda^\prime$ such that $O = O^\prime\otimes \identity$, for some $O^\prime \in \mathcal A_{\Lambda^\prime}$, and will be denoted by $\supp O$. We will denote by $\norm{\cdot}_p$ the Schatten $p$-norm over $\mathcal A_\Lambda$. Where there is no risk of ambiguity, $\norm{\cdot}$ will denote the usual operator norm (i.e.\ the Schatten $\infty$-norm).

A linear map $\mathcal T: \mathcal A_\Lambda \to \mathcal A_\Lambda$ will be called a \emph{superoperator} to distinguish it from operators acting on states. The support of a superoperator $\mathcal T$ is the minimal set $\Lambda^\prime \subseteq \Lambda$ such that $\mathcal T = \mathcal T^\prime \otimes \identity$, where $\mathcal T^\prime \in \mathcal B(\mathcal A_{\Lambda^\prime})$. A superoperator is said to be Hermiticity preserving if it maps Hermitian operators to Hermitian operators. It is said to be positive if it maps positive operators (i.e. operators of the form $\du O O$) to positive operators. $\mathcal T$ is called \emph{completely positive} if $\mathcal T \otimes \identity : \mathcal A_\Lambda \otimes M_n \to \mathcal A_\Lambda \otimes M_n$ is positive for all $n \ge 1$. Finally, we say that $\mathcal T$ is trace preserving if $\trace \mathcal T(\rho) =\trace \rho$ for all $\rho \in A_\Lambda$. For a general review on superoperators, see \cite{Wolf11}.

The dynamics of the system is generated by a superoperator $\mathcal L$, which plays a similar role to the Hamiltonian in the non-dissipative case. The evolution will be given by the one parameter semigroup $T_t = e^{t \mathcal L}$. The natural assumptions to make about $T_t$ are that it is a continuous semigroup of completely positive and trace preserving maps (CPTP, sometimes also called \emph{quantum channels}). Such maps are always contractive, meaning that $\norm{T_t}_{1\to 1, cb} \le 1$, where the completely-bounded norm is defined as:
\begin{equation}
\label{eq:1-to-1}
 \norm{T}_{1\to 1, cb} = \sup_n \norm{T\otimes \identity_n}_{1\to 1} = \sup_n \sup_{ \substack{X \in \mathcal A_\Lambda \otimes M_n \\ X \neq 0 }} \frac{ \norm{T\otimes \identity_n(X)}_1}{\norm{X}_1} .
\end{equation}
We will also be interested in the $\norm{\cdot}_{\infty \to \infty, cb}$ completely-bounded norm of superoperators, which is defined as follows:
\[ \norm{T}_{\infty\to \infty, cb} = \sup_n \norm{T\otimes \identity_n}_{\infty\to \infty} = \sup_n \sup_{\substack{X \in \mathcal A_\Lambda \otimes M_n \\ X \neq 0 }} \frac{ \norm{T\otimes \identity_n(X)}_\infty}{\norm{X}_\infty} .\]
The relationship between $\norm{\cdot}_{1\to 1,cb}$ and $\norm{\cdot}_{\infty \to \infty, cb}$ is the following:
\[ \norm{T}_{1 \to 1, cb}  = \norm{\du T}_{\infty \to \infty, cb},\]
where $\du T$ is the dual of $T$, satisfying $\trace A\, T(B) = \trace \du T(A)\, B$.
We will denote $\norm{\cdot}_{\infty\to\infty, cb}$ simply by $\norm{\cdot}_{cb}$ when there is no risk of confusing different completely-bounded norms.

\begin{remark}
\label{obs:stabilized-cbnorm}
As shown in \cite{stabilizedCBnorm}, the supremum in equation~\eqref{eq:1-to-1} is reached when $n$ is equal to the dimension of the space on which $T$ is acting: if $T : \mathcal M_n \to \mathcal M_n$, then $\norm{T\otimes \identity_n}_{1\to1} = \norm{T}_{1\to 1,cb}$.
\end{remark}

The generator $\mathcal L$ of the semigroup $T_t = e^{t \mathcal L}$, is called a \emph{Liouvillian}. All such generators can be written in the following general form, often called the \emph{Lindblad form}~\cite{Kossakowski,Lindblad} (see \cite{Wolf11}):
\begin{proposition}
$\mathcal L$ generates a continuous semigroup of CPTP maps if and only if it can be written in the form:
\begin{equation}\label{lindblad_form}
\mathcal L(\rho) = i [\rho, H] + \sum_j L_j \rho \du L_j - \frac 1 2 \sum_j \{\du L_j L_j, \rho \},
\end{equation}
where $H$ is a Hermitian matrix, $\{ L_j \}_j$ a set of matrices called the Lindblad operators, $ [\cdot,\cdot]$ denotes the commutator and $\{ \cdot, \cdot \}$ the anticommutator.
\end{proposition}
We will use the term \emph{Lindbladian} and Liouvillian interchangeably.
Since we consider Lindbladians $\mathcal L$ corresponding to local dissipative dynamics, we assume that $\mathcal L$ is a \emph{local Lindbladian} of the form:
\begin{equation}
\mathcal L = \sum_{u\in \Lambda} \sum_{r\ge 0} \mathcal L_{u,r} , \quad \supp \mathcal L_{u,r} = b_u(r),
\end{equation}
where each term in the sum above can be written in the form given by equation~\eqref{lindblad_form}.

Such a decomposition is obviously always trivially possible. We are interested in the cases in which the norms of $\mathcal L_{u,r}$ decay with $r$. Concretely, let us define the \emph{strength of interaction} for a Lindbladian as the pair $(J, f)$ given by:
\begin{equation}
J = \sup_{u,r} \norm{\mathcal L_{u,r}}_{1\to 1,cb}, \quad f(r) = \sup_u \frac{\norm{\mathcal L_{u,r}}_{1 \to 1,cb}}{J} .
\end{equation}

The behavior of $f(r)$ as $r$ goes to infinity corresponds to various interaction regimes, listed in order of decreasing decay rate:
\begin{itemize}
\item finite range interaction: $f(r)$ is compactly supported;
\item exponentially decaying: $ f(r) \le e^{-\mu r}$, for some $\mu > 0$;
\item quasi-local interaction: $f(r)$ decays faster than any polynomial;
\item power-law decay: $f(r) \le (1 +r)^{-\alpha}$, for some positive $\alpha >0$.
\end{itemize}

As we will see later, our result will apply whenever $\mathcal L$ has finite range, exponentially decaying, or quasi-local interactions. It will also hold in the power-law decay regime, but we will require a lower bound on the decay exponent $\alpha$, depending on the dimension of the underlying lattice. Not to overload the exposition, we will assume that $\mathcal L$ has finite range or exponentially decaying interactions, unless otherwise specified. The modifications needed to work with quasi-local interactions and power-law decay are presented in section~\ref{sec:power-law-decay}. Also, we will say that functions we construct along the way are \emph{fast-decaying}, if their decay rate is within the same decay class of $f(r)$ we are considering (or faster).

As shown in \cite{inverseeig}, from the spectral decomposition of $\mathcal L$ (and $T_t$) one can define two new CPTP maps which represent the infinite-time limit of the semigroup $T_t$. We will denote by $T_\infty$ the projector onto the subspace of stationary states (fixed points), and by $T_\phi$ the projector onto the subspace of periodic states. They correspond, respectively, to the kernel of $\mathcal L$ and to the eigenspace of purely imaginary eigenvalues of $\mathcal L$, which we denote $\mathcal F_{\mathcal L}$ and $\mathcal X_{\mathcal L}$, respectively. Both subspaces are invariant under $T_t$: in particular, $T_t$ acts as the identity over $\mathcal F_{\mathcal L}$, while it is a unitary operator over $\mathcal X_{\mathcal L}$. Note, also, that both subspaces are spanned by positive operators (i.e.\ density matrices)~\cite[Prop. 6.8, Prop. 6.12]{Wolf11}.
We will denote by $T_{\phi,t}$ the composition $T_t \circ T_\phi$.

Since we plan to exploit the local structure of $\mathcal L$, we will often make use of the restriction of $\mathcal L$ to a subset of the lattice. Given $A \subset \Lambda$, we define the \emph{truncated}, or \emph{localized}, generator:
\begin{equation}
  \mathcal L_A = \sum_{b_u(r) \subseteq A} \mathcal L_{u,r} .
\end{equation}

\subsection{Uniform families}
We are interested in how properties of dissipative dynamics scale with the size of the system. Hence, we are concerned with sequences of Lindbladians defined on lattices of increasing size, where all the Lindbladians in the sequence are from the same ``family''. To make this precise, we need to pin down how Lindbladians from the same family, but on different size lattices, are related to one-another. Our results will apply to very general sequences of Lindbladians, which we call \emph{uniform families}. Before giving the precise definition, it is helpful to consider some special cases.

For local Hamiltonians on a lattice, one often considers translationally-invariant systems with various types of boundary conditions (e.g.\ open or periodic boundaries). There is then a natural definition of what it means to consider the same translationally-invariant Hamiltonian on different lattice sizes. Translationally-invariant Lindbladians are an important special case of a uniform family. In this special case, all the local terms in the Lindbladian that act in the ``bulk'' of the lattice are the same. Another way of thinking about this is to formally consider the translationally-invariant Lindbladian $\mathcal M$ defined on the infinite lattice $\Gamma=\Z^d$, and then consider each member of the family to be a restriction of this infinite Lindbladian to a finite sub-lattice $\Lambda\subset\Gamma$ of some particular size:
\[ \mathcal L = \mathcal M_\Lambda .\]

This gives us translationally-invariant Lindbladians with \emph{open boundary condition}. But of course, this is only one particular choice of boundary terms (in this case, no boundary terms at all). We are also interested in more general boundary conditions, such as periodic boundaries. So, in addition to the ``bulk'' interactions coming from $\mathcal M$, we allow additional terms that play the role of \emph{boundary conditions}:
\[ \mathcal L = \mathcal M_\Lambda + \mathcal L^{\partial \Lambda} .\]

We allow greater freedom in the boundary terms $\mathcal L^{\partial \Lambda}$. For one thing, they are allowed to depend on the size of the lattice $\Lambda$. But more importantly, we allow strong interactions that \emph{cross the boundary} of $\Lambda$, coupling sites that would otherwise be far apart. For example, the case of \emph{periodic boundary conditions} corresponds to adding interaction terms that connect opposite boundaries of $\Lambda$, as if on a torus (see Fig.~\ref{fig:boundary_term}).

Now that we have given an intuition of what a uniform family is, it is time to present the formal definition. This includes all the special cases discussed so far, but also captures much more general families of Lindbladians that are not necessarily translationally-invariant, and many other types of boundary conditions (e.g.\ cylindrical boundaries, or boundary terms that give the sphere topology, or terms that force fixed states on the boundary\footnote{Or even M\"obius strips, Klein bottles, and other exotic topologies.}).

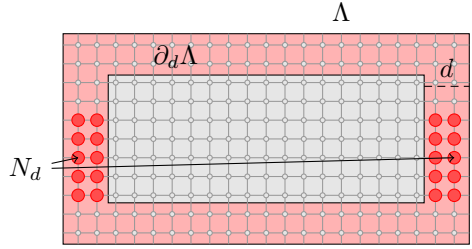
\begin{figure}
\begin{center}
  \begin{tikzpicture}
    \node (A0) at (-2.7,-1.4) {};
    \node (A1) at ($ -1*(A0) $) {};
    \node (B0) at (-2.1, -0.85) {};
    \node (B1) at ($ -1*(B0) $) {};
    \tikzstyle{boundarynode}=[draw=red, fill=red!70, circle, scale=0.5];
    \tikzstyle{bulknode}=[draw=black!40, fill=black!10, circle, scale=0.2];
    \draw[fill=red!30] (A0) rectangle (A1);
    \draw[fill=black!10] (B0) rectangle (B1);
    \path[step=.25,draw=black!40] (A0) grid (A1);
    \path[densely dashed,draw] (2.1,0.7) -- node[above] {$d$} ++(0.6,0);

    \foreach \x in {-2.5, -2.25, ..., 2.5}
    \foreach \y in {-1.25, -1, ..., 1.25}
      \node[bulknode] at (\x,\y) {};

    \foreach \y in {-2.5, -2.25}
    \foreach \x in {-0.75, -0.5, ..., 0.25} {
      \node[boundarynode] at (\y,\x) {};
      \node[boundarynode] at (-\y,\x) {};
    }

    \node at (1,1.7) {$\Lambda$};
    \node at (-1.2,1.1) {$\partial_d \Lambda$};

    \node (ND) at ($ (A0) + (-0.5,1) $) {$N_d$};
    \draw[->] (ND) -- (-2.5, -0.25) ;
    \draw[->] (ND) -- (2.5, -0.25) ;

  \end{tikzpicture}
\caption{{\small{Partition of the lattice $\Lambda$ into the bulk and the boundary of thickness $d$, $\partial_d \Lambda$ (see Def.~\ref{defn:boundary-condition}). The dark red regions on the boundary correspond to the interaction term $N_d$ coupling distant regions in $\Lambda$.}}}
\label{fig:boundary_term}
\end{center}
\end{figure}

\begin{definition}\label{defn:boundary-condition}
  Given $\Lambda \subset \Gamma$, a \emph{boundary condition} with strength $(J, f)$ for $\Lambda$ is a Lindbladian $\mathcal L^{\partial \Lambda} = \sum_{d\ge 1} N_d$, where
  \begin{gather*}
    \norm{N_d}_{1\to 1, c.b.} \le J \abs{\partial_d \Lambda} f(d)
    \intertext{with}
    \partial_d \Lambda :=  \{ x \in \Lambda \,|\, \dist(x,\Lambda^c) \le d \},\\
    \supp N_d \subset \partial_d \Lambda.
  \end{gather*}
\end{definition}

\begin{definition}
  \label{def:uniform-family}
  A \emph{uniform family} of Lindbladians $\mathcal L$ with strength $(J,f)$ is given by the following:
  \begin{enumerate}[(i)]
  \item \textit{infinite Lindbladian}: a Lindbladian $\mathcal M$ on all of $\Z^D$ with strength $(J,f)$;
  \item \textit{boundary conditions}: a set of {\it boundary conditions} $\mathcal L^{\partial \Lambda}$, with strength $(J, f)$ and $\Lambda = b_u(L)$, for each $u\in \Z^D$ and $L \ge 0$.
  \end{enumerate}

  We say that the family is \emph{translationally invariant} if $\mathcal M$ is translationally invariant and $\mathcal L^{\partial b_u(L)}$ is independent of $u$.
\end{definition}

Given a uniform family $\mathcal L$, we fix the following notation for evolutions defined on a subset $\Lambda$:
\begin{align}
\mathcal L^\Lambda = \mathcal M_\Lambda &\quad \text{``open boundary'' evolution} ; \\
\mathcal L^{\overline \Lambda} = \mathcal M_\Lambda + \mathcal L^{\partial \Lambda} &\quad \text{``closed boundary'' evolution},
\end{align}
with the respective evolutions $T_t^\Lambda = \exp(t \mathcal L^\Lambda)$ and  $T_t^{\overline \Lambda} = \exp(t \mathcal L^{\overline \Lambda})$.

\begin{remark}
  \label{obs:disjoint-region}
  In the rest of the paper, we will make use of the following notation:
  \[A(s) = \{ x \in \Lambda \,|\, \dist(x,A) \le s \}.\]
  Since we are interested in observables whose support is not connected, we want to consider more general regions than balls: in particular, we are interested in disjoint unions of convex regions (for example, to calculate two-point correlation functions).
  Consider what happens to such a region $A = A_0 \sqcup A_1$ when we grow it by taking $A(s)$.
  When $s$ becomes sufficiently large, $A_0(s)$ will merge with $A_1(s)$.
  At this point, $A(s)$ will not be a disjoint union of balls anymore.
  To avoid such complications, for $s$ large enough that disjoint balls merge, we will replace $A(s)$ by the smallest ball containing it.
  This will not hurt us, as $\abs{A(s)}$ will still grow asymptotically at the same rate, which will be sufficient for our purposes.
\begin{figure}[h]
\begin{center}
  \begin{tikzpicture}
    \path[step=.25cm,draw=black!40] (-2.5,-2.5) grid (2.5,2.5);
    \draw[thick,pattern=north west lines,pattern color=cyan] (0,0) circle(2cm);
    \foreach \x/\label in {-1/A_0, 1/A_1} {
      \node (\label) at (-\x,0) {};
      \draw[thick,fill=cyan!30] (\label) circle(1cm);
      \draw[thick,fill=magenta!50] (\label) circle(0.4cm);
      \node at (\label) {$\label$};
    };
    \path[dashed,draw] ($(A_0) - (0.4,0)$) -- node[above] {$s$} ($ (A_0) - (1,0) $);
    \path[dashed,draw] ($(A_1) + (0.4,0)$) -- node[above] {$s$} ($ (A_1) + (1,0) $);
    \node at (0,1.3) {$A(s)$};
  \end{tikzpicture}
  \caption{The convention on how to grow a region $A=A_0 \sqcup A_1$.}
\end{center}
\end{figure}
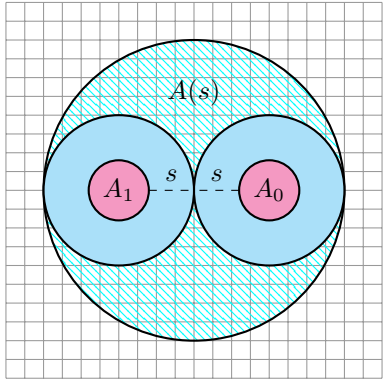
\end{remark}

\begin{definition}\label{defn:fixed-point}
  We say that $\mathcal L$ has a unique fixed point if, for all $\Lambda = b_u(L)$, $\mathcal X_{\mathcal L^{\overline \Lambda}} = \mathcal F_{\mathcal L^{\overline \Lambda}} = \{ \rho_\infty^{\overline \Lambda} \}$. In other words, $T^{\overline \Lambda}_\phi(\rho) = T^{\overline \Lambda}_\infty(\rho)=\rho^{\overline \Lambda}_\infty$, for all density matrices $\rho$.
\end{definition}

Note that if for all pure states $\rho$, we have $T^{\overline \Lambda}_t(\rho) > 0$ (positive definite), for $t > 0$, then the evolution has a unique fixed point $\rho_\infty > 0$ (see \cite[Thm. 6.7]{Wolf11}).

We will drop the superscript from $T^{\overline \Lambda}_t$, and simply write $T_t$, when we consider some fixed $\Lambda \subset \Gamma$. In that case, we will refer to the number of lattice sites in $\Lambda$ as the \emph{system size}.


\section{Main result}
\label{sec:main-result}
\subsection{Assumptions for stability}

In Hamiltonian systems, the spectral gap (the difference between the two lowest energy levels) plays a crucial role in a number of settings, from defining quantum phases and phase transitions \cite{quantphase} to understanding the entanglement and correlations present in the system \cite{Hastings06,Hastings07,Hastings10} and analyzing its stability to perturbations~\cite{Bravyi-Hastings-Spiros,Spiros11}. On the other hand, it is known that for Lindbladians, the spectral gap (in this setting, the least negative real part of the non-zero eigenvalues) alone is not sufficient to fully characterize the convergence properties of the dissipative evolution~\cite{Kastoryano12,spectralboundsquantum}. Therefore, we will instead impose a more general requirement on the convergence of the dynamics. (The dependence of this requirement on spectral properties of $\mathcal L$, i.e.\ properties depending on the eigenvalues -- like the gap -- and eigenvectors -- like the condition number, is an active area of research.)

\begin{definition}[rapid mixing]
\label{def:global-rapid-mixing}
Given a one-parameter semigroup of CPTP maps $T_t$, define the \emph{contraction} of $T_t$ as the following quantity:
\begin{equation}
  \eta(T_t) = \frac{1}{2} \sup_{\substack{\rho \ge 0 \\ \trace \rho\, =1 }} \norm{T_t(\rho)- T_{\phi,t} (\rho)}_1.
\end{equation}

Given a family of such semigroups $\{T_t^\alpha \}_\alpha$, each of which is acting on $\mathcal B( H_{\alpha})$
for some Hilbert space $H_{\alpha}$ of finite dimension $d_\alpha$,
we say that it satisfies \emph{rapid mixing} if there exist constants $c, \gamma, \delta >0$,
such that for each $\alpha$:
\begin{equation}
\label{eq:global-rapid-mixing}
  \eta(T_t^\alpha) \le c \log^\delta(d_\alpha) \ e^{-t \gamma} .
\end{equation}
We will write RM$(\gamma, \delta)$ for short.
\end{definition}

If each $H_\alpha$ has a tensor product structure of the type defined in equation~\eqref{eq:hilbert-space},
then the rapid mixing assumption can be restated as a logarithmic scaling with system size of the mixing time.
Since the dimension of $\mathcal H_\Lambda$ is $(\dim \mathcal H_x)^{\abs{\Lambda}}$, for uniform families condition~\eqref{eq:global-rapid-mixing} is equivalent to:
\begin{equation}
\label{eq:non-ti-grm}
 \eta(T_t^{\overline \Lambda}) \le c \abs{\Lambda}^\delta e^{-t \gamma} \quad \forall \Lambda.
\end{equation}

Let us recall a result from \cite{Kastoryano12}.
\begin{theorem}[Contraction for commuting Lindbladians]\label{thm:commuting-liouvillians}
Let $\{ \mathcal L_j \}_{j=0}^n$ be a set of commuting Lindbladians. Define $\mathcal L = \sum_j \mathcal L_j$ and the corresponding evolutions $T^j_t = e^{t \mathcal L_j}$ and $T_t = e^{t \mathcal L}$. Then:
\begin{equation}
\eta(T_t) \le \sum_j \eta(T_t^j) .
\end{equation}
\end{theorem}

In particular, consider the definition of $T_t^{\overline \Lambda}$ given in remark~\ref{obs:disjoint-region}
for $\Lambda \subset \Gamma$ being a disjoint union of balls.
Then the previous theorem implies that, if $\mathcal L$ is translationally-invariant and
it satisfies equation~\eqref{eq:global-rapid-mixing} for each of the connected components of
$\Lambda$, then it also satisfies the same equation (up to constants) for~$\Lambda$.

Finally, for translationally-invariant uniform families of Lindbladians,
it is sufficient to satisfy equation~\eqref{eq:global-rapid-mixing} for lattices centered at the origin: $\Lambda = b_0(L)$, $L \ge 1$.

\subsection{Stability}
With the required assumptions laid out, we can now state our main result.

\begin{theorem}
  \label{thm:stability}
  Let $\mathcal L$ be a uniform family of local Lindbladians with a unique fixed point, satisfying rapid mixing (equation~\eqref{eq:non-ti-grm}), and consider a perturbation of the form:
$$E^{\overline \Lambda} =  \sum_{u\in \Lambda} \sum_{r \ge 0} E_{u,r} + \sum_{d\ge 1} E_d,$$
where $E_{u,r}$ is supported on $b_u(r)$ and each $E_d$ is supported on $\partial_d \Lambda$ (see definition~\ref{defn:boundary-condition}) and
\[ \norm{E_{u,r}}_{1\to 1,cb} \le \epsilon\, e(r), \quad \norm{E_d}_{1\to 1, cb} \le \epsilon \abs{\partial_d \Lambda} e(d), \]
where $\epsilon > 0$ is a constant (the {\it strength} of the perturbation) and $e(r)$ is a fast-decaying function.
Consider the perturbed evolution $$S_t = \exp t (\mathcal L^{\overline \Lambda} + E^{\overline \Lambda} )$$ and suppose that the following assumptions hold:
  \begin{enumerate}[(i)]
  \item $\du E_{u,r}(\identity) = \du E_d(\identity) = 0$ (or, equivalently: $\trace E_{u,r}(O_A) = \trace E_d(O_A) = 0$, for all operators $O_A$).
  \item $S_t$ is a contraction for each $t\ge 0$.
  \end{enumerate}
For an observable $O_A$ supported on $A  \subset \Lambda$, we have for all $t \ge 0$:
  \begin{equation}\label{eq:stability}
    \norm{T_t^{*}(O_A) - S_t^{*}(O_A)} \le c(\abs{A})\, \norm{O_A} \left( \epsilon + \abs{\Lambda} \nu^{-1}_\eta(d_A) \right),
  \end{equation}
  where $d_A = \dist(A, \Lambda^c)$;
  $\eta$ is positive and independent of $\Lambda$; $\nu_\eta^{-1}(d) \le (1+d)^{-D-1}$;
  $c(|A|)$ is independent of $\Lambda$ and $t$, and is bounded by a polynomial in $\abs{A}$.
\end{theorem}

\begin{remark}
Note that, for a fixed $A$, if we let $\Lambda$ grow then $d_A$ will increase
with the linear size of $\Lambda$ and consequently $\abs{\Lambda}\nu_\eta^{-1}(d_A)$ will vanish in the limit.
\end{remark}

\begin{remark}
The assumptions (i)-(ii) on the perturbation $E$ are satisfied whenever $\mathcal M_{u,r} + E_{u,r}$ and $ N_d + E_d$ (as in definition~\ref{def:uniform-family})
are Lindbladians, but  the theorem also covers more general perturbations.
\end{remark}

\begin{remark}
  Since we are free to choose an $O_A$ with support on two non connected regions, we can apply theorem~\ref{thm:stability-single-fixed-point} to two-point correlation functions (or more generally $k$-point correlation functions, for fixed $k$) and still obtain that the error introduced by the perturbation depends linearly on the strength of the perturbation (and not on its global norm).
\end{remark}

A set of tools already applied in the setting of classical Markov chains \cite{Gross-1,Gross-2,Gross-3,Martinelli97}, and recently generalized to quantum dissipative systems \cite{Quantum-Log-Sobolev,temme:122201}, are the so-called {\it Logarithmic Sobolev inequalities} (in short, log-Sobolev inequalities). Introduced in a different setting to study hypercontractivity of semigroups \cite{King}, they provide the right asymptotic regime needed to satisfy the rapid mixing condition: in fact, the existence of a system size independent log-Sobolev constant implies a logarithmic scaling of the mixing time, which is exactly what is required in definition~\ref{def:global-rapid-mixing}. Without going into the technical details of log-Sobolev inequalities (which can be found in \cite{Quantum-Log-Sobolev,temme:122201}), we summarize this fact in the following corollary:

\begin{corollary}
\label{cor:stability-log-sobolev}
Let $\mathcal L$ belong to a uniform family of translationally-invariant Lindbladians with a unique fixed point for each system size. If $\mathcal L$ satisfies the log-Sobolev inequality with a system-size independent constant, then the dissipative dynamics are stable, in the sense of theorem~\ref{thm:stability}.
\end{corollary}

In particular, in \cite{2014arXiv1403.5224T} it was shown that product evolutions, i.e.\ Lindbladians that can be decomposed as a sum of independent terms $\mathcal L_k$ acting on a single subsystem,
satisfy a log-Sobolev inequality with a log-Sobolev constant lower bounded by the minimum of the spectral gaps of $\mathcal L_k$ (times a factor depending on the maximum dimension of the subsystems).
Moreover, the authors of \cite{2014arXiv1403.5224T} were able to show that
Davies maps associated to a graph state Hamiltonian \cite{MR2427572}
(which are not in a product form, but can be analyzed in a similar way)
and the ones associated to free-fermionic Hamiltonians
have a system-size independent log-Sobolev constant.

In all such cases, corollary~\ref{cor:stability-log-sobolev} implies that the evolution of local observables is stable.

\subsection{Local observables vs.\ global observables}
The bound in equation~\eqref{eq:stability} scales with the size of the support of the observable $O_A$. Although the dependence is polynomial, for observables with large support the result is not useful. Still, in most realistic experiments, we are interested in the behavior of observables with fixed support and low-degree correlation functions, making the above result widely applicable. Nonetheless, one might ask more generally for a system-size independent bound on:
\begin{equation}\label{bnd:fixed_points}
\sup_{\rho} \| T_\infty(\rho)-  S_\infty(\rho)\|_1,
\end{equation}
where $S_\infty$ is the fixed-point projector for the evolution of the perturbed Lindbladian. However, this is not possible; the limitation to local observables is in some sense strict. There is no hope of finding such a bound for global observables, as the following simple example shows.\footnote{Indeed, all global stability results for quantum Linbladians we are aware of have a dependency on the total Hilbert space dimension \cite{Oleg-Wolf}.}
\begin{example}
  Consider $N$ independent amplitude damping processes, with uniform rate $\gamma$ (which we can suppose w.l.o.g.\ equal to $1$). This Lindbladian can be written as
  \[ \mathcal L_N = \sum_{k=1}^{N} \identity_{1\dots k-1} \otimes \mathcal L_1 \otimes \identity_{k+1 \dots N} ,\]
  where
\[ \mathcal L_1(\rho)= \ketbra{}{0}{1} \rho \ketbra{}{1}{0} -\frac 12 \{ \rho, \diagstate{1} \} \]
 is an amplitude damping process on a single qubit, describing the decay of the state $\ket{1}$ into $\ket{0}$ at a constant rate $\gamma =1$. This Lindabladian has gap $1/2$ and $e^{t \mathcal L_N} = (e^{t \mathcal L_1})^{\otimes N}$ has mixing time of order $O(\log N)$ \cite[Sec. V. C.]{Kastoryano12}. The fixed point is the pure state $\diagstate{0\dots0}$.

Now consider $\mathcal L^\epsilon_1$, a rotation of $\mathcal L_1$, which fixes $\ket{\alpha_0} = \sqrt{1-\epsilon^2}\ket{0}$ $+ \epsilon \ket{1}$. We have $\norm{\mathcal L_1 - \mathcal L^\epsilon_1}_{1\to 1} = O(\epsilon)$, but the new fixed point $\diagstate{\alpha_0}^{\otimes N}$ is almost orthogonal to the original one, since the overlap between the two is
\[ \braket{0\dots0}{\alpha_0\dots \alpha_0} = \braket{0}{\alpha_0}^N = (1-\epsilon^2)^{N/2} \sim e^{-N \epsilon^2/2} \to 0 \text{ as } N \to \infty.\]
This shows that, in general, there is no good bound on~\eqref{bnd:fixed_points} (note that we have $\|\diagstate{0\dots 0}-\diagstate{a_0\dots \alpha_0}\|_1 \ge 1- |\braket{0\dots 0}{\alpha_0\dots \alpha_0}|^2$) and that the dependence on the support of the observable in equation
\eqref{eq:stability} cannot be improved:
to see this consider the observable $O_r = \ketbra{1\dots r}{0 \dots 0}{0 \dots 0}$ acting on $r \le N$ spins. $O_r$ has norm one, and
\[ O_\infty := \lim_{t\to \infty}T_t^*(O_r) = \identity, \quad
O_\infty^\epsilon := \lim_{t \to \infty}T_t^{\epsilon\, *}(O_r) = \braket{0}{\alpha_0}^{2r} \identity = (1 - \epsilon^2)^r \identity.\]
Consequently, we have:
\[ \norm{O_\infty - O_\infty^\epsilon} = 1 - (1-\epsilon^2)^r = r \epsilon^2 + o(\epsilon^2) .\]
This implies that any upper bound to $\norm{O_\infty - O_\infty^\epsilon}$ has to be at least
linear in $r$, which is the size of the support of $O_r$.
\end{example}

\subsection{Do we need all the assumptions?}
It is reasonable to ask if the assumptions of theorem~\ref{thm:stability} are all necessary. We have just shown that we must necessarily consider local observables if we are to have meaningful bounds, but what about the other conditions? We will now present three
examples, each consisting of a family of Lindbladiands with periodic boundary conditions, such that, in order:
\begin{itemize}
  \item The family is uniform and translationally invariant, satisfies rapid mixing,
    but does not have a unique fixed point;
  \item The family has a unique fixed point,
    but is not uniform and fails to satisfy rapid mixing;
  \item The family (presented in appendix~\ref{sec:non-stable-example})
    has a unique fixed point, satisfies rapid mixing,
    but is not uniform.
\end{itemize}
All these systems will be shown to be unstable.

\begin{example}
  \label{ex:counterexample-1}
  Consider a 1D chain composed of $N$ 4-level systems, with an independent Lindbladian acting on each site, having the following Lindblad operators
  \[ L_1 = \ketbra{}{0}{1}, \quad
  L_2 = \ketbra{}{0}{3}, \quad
  L_3 = \ketbra{}{2}{1}, \quad
  L_4 = \ketbra{}{2}{3}, \]
  and denote by
  \[ \mathcal L_0(\rho) = \sum_{i=1}^4 L_i \rho \du{L_i} - \frac 12 \{ \rho, \du{L_i} L_i \} .\]
  The global Lindbladian $\mathcal L_N$ is given by applying $\mathcal L_0$ independently on each site $k=1\dots N$:
  \[ \mathcal L_N = \sum_{k=1}^N \identity_{1,\dots, k-1} \otimes \mathcal L_0 \otimes \identity_{k+1,\dots, N} .\]
  Then we have that
  \[ \mathcal L_0( \ketbra{}{i}{j} ) =
  \begin{cases}
    0 & \text{if } i=j \in \{ 0, 2 \} \\
    \diagstate{0} + \diagstate{2} - 2 \ketbra{}{i}{j} & \text{if } i=j \in \{1,3\} \\
    - [\chi_{ \{1,3\} } (i) + \chi_{ \{1,3\} }(j)] \ketbra{}{i}{j} & \text{if } i\neq j .
  \end{cases}\]

  Diagonal states of the form $\diagstate{i}$ evolve according to the classical Markov process embedded in the Lindbladian, while off-diagonal elements $\ketbra{}{i}{j}$ evolve as
  \[ T_t(\ketbra{}{i}{j} )= \exp(-t [\chi_{ \{1,3\} } (i) + \chi_{ \{1,3\} }(j)] ) \ketbra{}{i}{j} ;\]
  where $\chi_{ \{1,3\} }$ denotes the indicator function of the set $\{1,3\}$.
 This implies that the space of fixed points $\mathcal F_{\mathcal L_0}$ is given by $\operatorname{span} \{ \diagstate{0}, \diagstate{2}, \ketbra{}{0}{2}, \ketbra{}{2}{0}\}$. Since $\mathcal L_0$ has gap equal to 1, theorem~\ref{thm:commuting-liouvillians} implies that $\mathcal L_{N}$ satisfies rapid mixing. $\mathcal L_{N}$ forms a uniform family, but it does not satisfy the unique fixed point condition.

  Consider now the following additional Lindbladian
  \[ \mathcal E_0(\rho) = \frac 2 {N} \left[ \ketbra{}{0}{2} \rho \ketbra{}{2}{0} - \frac 12 \{ \rho, \diagstate{2} \} \right] .\]
  Then, we have:
  \[ (\mathcal L_0 + \mathcal E_0)( \ketbra{}{i}{j} )=
  \begin{cases}
    0 & \text{if } i=j=0\\
    \diagstate{0} + \diagstate{2} - 2 \ketbra{}{i}{j} & \text{if } i=j=1,3 \\
    \frac 2{N} (\diagstate{0} - \ketbra{}{i}{j} )& \text{if } i=j=2 \\
    -\left(\chi_{ \{1,3\} } (i) + \chi_{ \{1,3\} }(j) + \frac{ \chi_{\{i,j\}}(2)}{N} \right) \ketbra{}{i}{j} & \text{if } i\neq j .
  \end{cases}\]

Again, this implies that $\mathcal F_{\mathcal L_0 + \mathcal E_0} = \{ \diagstate{0} \}$.
Consequently  $\mathcal L_N + \mathcal E_N$ has a unique fixed point. It is not a uniform family, and it does not satisfy rapid mixing, as it is not even globally gapped. To see this, note that for $\sigma = \diagstate{200\dots0}-\diagstate{020\dots0}$:
\[(\mathcal L_N + \mathcal E_N)(\sigma) = -\frac2 N \, \sigma.\]
Analogously, $\mathcal L_N + \du{\mathcal E_N}$ satisfies the same conditions as $\mathcal L_N + \mathcal E_N$, but the unique fixed point is now $\diagstate{2\ldots 2}$.

All three systems described above are unstable, since we can transform one into the other by applying a perturbation of order $O(1/N)$, yet the fixed points of $\mathcal L_N + \mathcal E_N$ and $\mathcal L_N + \du{\mathcal E_N}$ are locally orthogonal (while $\mathcal L_N$ has both of them as fixed points).
\end{example}

\subsection{Relaxations of the rapid mixing condition}
In the case of finite range or exponentially decaying interactions,
the proof of theorem~\ref{thm:stability} still holds if we relax
equation~\eqref{eq:non-ti-grm} by requiring only a polynomial
decay in time, i.e. a bound of the form
\begin{equation}
\label{eq:relaxed-mixing}
 \eta(T_t^{\overline \Lambda}) \le c \abs{\Lambda}^\delta \gamma (t) ,
\end{equation}
if $\gamma (t)$ is a fast enough decaying function,
where the threshold decay rate is
determined by system-size independent costants (such as the Lieb-Robinson bound
constants and the geometrical dimension of the underlying lattice structure).

Determining the precise value of such threshold requires an argument similar
to the one given for the case of power-law decaying interactions in
section~\ref{sec:power-law-decay}, and is presented in section~\ref{sec:relaxing-rapid-mixing}.


\section{Toolbox for the proof}
\label{sec:toolbox}

Before presenting the proof of theorem~\ref{thm:stability}, we need to introduce some useful tools. We present them in full generality, including the case of power-law decay of interactions, without restricting here to exponentially decaying interactions.

\subsection{Lieb-Robinson bounds for Lindbladian evolution}
\label{sec:lieb-robinson-bounds}
We first recall a generalization of Lieb-Robinson bounds to non-Hamiltonian evolution, due to \cite{Poulin10} and \cite{Nachtergaele12}, which we use to derive a number of useful tools that allow us to approximate the support of an evolving observable with a finite set which grows linearly in time. The following condition is sufficient for the bounds to hold.

\begin{assumption}[Lieb-Robinson condition]
\label{lieb-rob-assumption}
Let $\mathcal L = \sum_{u,r} \mathcal L_{u,r}$ be a local Lindbladian. There exist positive constants $\mu$ and $v$, such that:
\begin{equation}
\label{eq:lieb-robinson-condition}
\sup_{x \in \Lambda} \sum_{u \in \Lambda} \sum_{r \ge \dist(u,x)}
\norm{\mathcal L_{u,r}}_{1\to 1,cb} \abs{b_u(r)} \nu_\mu(r)  \le \frac{v}{2} < \infty;
\end{equation}
where $\nu_\mu(\cdot)$ is one of the following:
\begin{align}
\tag{LR-1}
\label{eq:lieb-robinson-condition-exp}
\nu_\mu(r) &= e^{\mu r}, \\
\tag{LR-2}
\label{eq:lieb-robinson-condition-power}
\nu_\mu(r) &= (1+r)^\mu .
\end{align}
Note that both functions are submultiplicative, in the sense that $\nu_\mu(r+s) \le \nu_\mu(r) \nu_\mu(s)$. Moreover, $\nu_a(r)^b = \nu_{ab}(r)$.

The constant $v$ is called the \emph{Lieb-Robinson velocity} of $\mathcal L$, while the reciprocal function $\nu_\mu^{-1}(r) = 1/\nu_\mu(r)$ is called the \emph{Lieb-Robinson decay} of $\mathcal L$.
\end{assumption}

Note that if $\mathcal L$ has interaction strength $(J, f)$, then condition~\eqref{eq:lieb-robinson-condition} can be replaced by:
\begin{equation}
\label{eq:weak-lieb-robinson-condition}
J \sup_{x,y \in \Lambda} \sum_{n \ge 0} \abs{b_x(r)\setminus b_x(r-1)} \sum_{r \ge n} f(r) \abs{b_y(r)} \nu_\mu(r) \le \frac v {2} < \infty.
\end{equation}
Since our systems are embedded in the lattice $\Z^D$, we have that $v < \infty$, as long as:
\begin{equation}
\sum_{n \ge 0} n^{D-1} F_\mu(n) < \infty, \quad F_\mu(n) := \sum_{r \ge n} r^D \, f(r)\, \nu_\mu(r).
\end{equation}

\begin{remark}
Condition~\eqref{eq:lieb-robinson-condition-exp} is satisfied when $\mathcal L$ has finite-range or exponentially decaying interactions, while condition~\eqref{eq:lieb-robinson-condition-power} is satisfied when $\mathcal L$ has quasi-local interactions. If $\mathcal L$ has interactions decaying as a power-law with exponent $\alpha$, then condition~\eqref{eq:lieb-robinson-condition-power} is satisfied whenever $\alpha > 2D+1$ (by choosing $\mu < \alpha - (2D+1)$).
\end{remark}

\begin{theorem}[Lieb-Robinson bound]
Suppose $\mathcal L$ is a local Lindbladian satisfying assumption~\ref{lieb-rob-assumption}. Let $O_X$ be an observable supported on $X \subset \Lambda$, and denote by $O_X(t) = \du{T_t}(O_X)$ its evolution under $\mathcal L$. Let $K : \mathcal A_Y \to \mathcal A_Y$ be a super-operator supported on $Y \subset \Lambda$ which vanishes on $\identity$. Then, the following bound holds \cite{Poulin10,Nachtergaele12}:
\begin{equation}
\label{eq:lieb-robinson-bound}
\norm{K(O(t))} \le \norm{K}_{\infty \to \infty, cb} \norm{O_X} C(X,Y) \frac{\left( e^{vt} -1 \right)}{\nu_\mu(\dist(X,Y))},
\end{equation}
where $C(X,Y) = \min( \abs{X}, \abs{Y})$.
\end{theorem}

From now on, we will only consider Lindbladians which satisfy equation~\eqref{eq:weak-lieb-robinson-condition} with either of the two possible assumptions on $\nu_\mu(\cdot)$.

\begin{lemma}[Comparing different dynamics]
  \label{lemma:lieb-robinson-comparison}
  Let $\mathcal L_1$ and $\mathcal L_2$ be two local Lindbladians,
  and suppose $\mathcal L_2$ has Lieb-Robinson speed and decay bounded by $v$ and $\nu_\mu^{-1}$.
  Consider an operator $O_X$ supported on $X \subset \Lambda$, and denote by $O_i(t)$ its evolution under $\mathcal L_i,\, i=1,2$.
  Suppose that $\mathcal L_1 - \mathcal L_2 = \sum_{r\ge 0} M_r$,
  where $M_r$ is a superoperator supported on $Y_r$ which vanishes on $\identity$, and $\dist(X, Y_r) \ge r$.
  Then the following holds:
  \begin{equation}
    \norm{O_1(t) - O_2(t)} \le  \norm{O_X} \abs{X} \frac{e^{vt} -vt -1}{v} \sum_{r=0}^\infty  \norm{M_r}_{1\to 1,cb} \nu^{-1}_\mu(r) .
  \end{equation}
\end{lemma}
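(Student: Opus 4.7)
The natural route is a Duhamel-type interpolation between the two Heisenberg flows. Set $f(s) = T_{1,t-s}^{*} T_{2,s}^{*} O_X$, so that $f(0) = O_1(t)$ and $f(t) = O_2(t)$. Differentiating,
\begin{equation*}
  f'(s) = -T_{1,t-s}^{*}\,(\mathcal L_1^{*} - \mathcal L_2^{*})\, T_{2,s}^{*} O_X = -\sum_{r\ge 0} T_{1,t-s}^{*}\, M_r^{*}\, O_2(s),
\end{equation*}
which, after integrating from $0$ to $t$ and taking norms, gives
\begin{equation*}
  \norm{O_1(t) - O_2(t)} \le \sum_{r\ge 0} \int_0^t \norm{T_{1,t-s}^{*}\, M_r^{*}\, O_2(s)}\, ds.
\end{equation*}
The choice of interpolation order is important: I want the \emph{inner} evolution to be $T_{2,s}^{*}$, because that is the only flow for which a Lieb-Robinson bound is available.

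Next, the factor $T_{1,t-s}^{*}$ is a unital completely positive map, hence a contraction in operator norm, so it can simply be dropped. The key estimate is then the application of the Lieb-Robinson bound to $M_r^{*} O_2(s)$: $M_r$ is supported on $Y_r$ with $\dist(X, Y_r) \ge r$ and vanishes on the identity, and $O_X$ evolves under $\mathcal L_2$, which satisfies the Lieb-Robinson condition with velocity $v$ and decay $\nu_\mu^{-1}$. Hence
\begin{equation*}
  \norm{M_r^{*} O_2(s)} \le \norm{M_r}_{1\to 1, cb}\, \norm{O_X}\, \abs{X}\, \frac{e^{v s} - 1}{\nu_\mu(r)},
\end{equation*}
using $C(X, Y_r) \le \abs{X}$ and $\dist(X, Y_r) \ge r$ together with monotonicity of $\nu_\mu$. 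Note also the identification $\norm{M_r^{*}}_{\infty\to\infty, cb} = \norm{M_r}_{1\to 1, cb}$.

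Finally, integrating in $s$ using $\int_0^t (e^{v s} - 1)\, ds = (e^{v t} - v t - 1)/v$ and summing over $r$ yields the claimed inequality
\begin{equation*}
  \norm{O_1(t) - O_2(t)} \le \norm{O_X}\, \abs{X}\, \frac{e^{v t} - v t - 1}{v} \sum_{r\ge 0} \norm{M_r}_{1\to 1, cb}\, \nu_\mu^{-1}(r).
\end{equation*}
There are no genuine obstacles; the only subtle point is picking the correct order of the Duhamel interpolation so that $\mathcal L_2$ (the one satisfying the Lieb-Robinson hypothesis) acts on $O_X$ before the local perturbation $M_r^{*}$ is applied. Once that is arranged, the proof is a straightforward composition of Lieb-Robinson with the unital-CP contractivity of $T_{1,t-s}^{*}$.
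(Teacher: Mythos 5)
Your proof is correct and follows essentially the same route as the paper: both derive the integral representation $O_1(t)-O_2(t)=\int_0^t e^{(t-s)\mathcal L_1^{*}}(\mathcal L_1^{*}-\mathcal L_2^{*})O_2(s)\,ds$ (the paper via the inhomogeneous ODE for $h(t)=O_1(t)-O_2(t)$, you via the equivalent Duhamel interpolation), then drop the outer unital CP contraction and apply the Lieb--Robinson bound for $\mathcal L_2$ to each $M_r^{*}O_2(s)$. Your remark about the order of interpolation is exactly the right subtlety, and your final bound with $\nu_\mu^{-1}(r)$ matches what the paper's own proof actually produces (the $\nu_\beta^{-1}$ in the lemma statement appears to be a notational slip in the paper).
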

\begin{proof}
  Let $h(t) = O_1(t) - O_2(t)$. Calculating its derivative, we obtain
  \[ h^\prime(t) = \du{\mathcal L_1} O_1(t) - \du{\mathcal L_2} O_2(t) =
  \du{\mathcal L_1} h(t) + (\du{\mathcal L_1} - \du{\mathcal L_2})O_2(t). \]
  Since $h(0) = 0$, this differential equation for $h(t)$ has solution
  \begin{equation}\label{eq:integral-rep}
    \begin{split}
    h(t) = O_1(t) - O_2(t)
      &= \int_0^t e^{(t-s) \du{\mathcal L_1}} (\du{\mathcal L_1} - \du{\mathcal L_2}) O_2(s) \de s \\
      &= \sum_{r\ge 0} \int_0^t e^{(t-s) \du{\mathcal L_1}} \du{M_r} O_2(s) \de s ,
    \end{split}
  \end{equation}
  giving us a useful integral representation for $O_1(t) - O_2(t)$. From this, we obtain the estimate
  \[ \norm{O_1(t)-O_2(t)} \le  \sum_{r\ge 0} \int_0^t \norm{ \du{M_r} O_2(s) } \de s ,\]
  where we have used the fact that $e^{t \du{\mathcal L_1}}$ is a contraction with respect to $\norm{\cdot}_\infty$ for each $t \ge 0$.

  We can now apply the Lieb-Robinson bound (equation~\eqref{eq:lieb-robinson-bound})
  to each of the terms in the sum in the previous estimate, to obtain:
  \begin{multline*}
    \norm{O_1(t)-O_2(t)} \\
    \le \sum_{r\ge 0} \norm{M_r}_{1\to 1,cb} \norm{O_X} C(X,Y_r) \nu_\mu^{-1}(\dist(X,Y_r)) \int_0^t (e^{vs} -1) \de s ,
  \end{multline*}
  which implies the claimed bound.
  \qed
\end{proof}

A particular application of the previous lemma is when $\mathcal L_2$ is a restriction of $\mathcal L_1$ onto a smaller region.
Since this case occurs frequently, and is of particular interest, we state it as a separate lemma:

\begin{lemma}[Localizing the evolution]
  \label{lem:localizing-evolution}
  Let $O_A$ be an observable supported on a finite $A \subset \Lambda$. Denote by $O_A(t) = \du{T_t}(O_A)$ its evolution under a local Lindbladian $\mathcal L$ with strength $(J, f)$. Given $r >0$, denote by $O_{A}(r;t)$ its evolution under the localized Lindbladian $\mathcal L_{A(r)}$.

Then, the following bound holds:
  \begin{equation}
    \label{eq:localizing-evolution}
   \norm{O_A(t) - O_{A}(r;t)} \le \norm{O_A} \abs{A} J \frac{e^{vt} -1 -vt}{v} \nu^{-1}_\beta(r),
  \end{equation}
  where $\nu^{-1}_\beta(r)$ decays exponentially if $\mathcal L$ satisfies condition~\eqref{eq:lieb-robinson-condition-exp}, while decays as $(1+r)^{-\beta}$ if $\mathcal L$ satisfies condition~\eqref{eq:lieb-robinson-condition-power}. In this case, if we denote by $\alpha$ the decay rate of $\mathcal L$, then $\beta$ is given by:
\[ \beta = \begin{cases}
  \alpha - 3D & \text{if } \alpha \ge 5D -1 ;\\
  \frac{1}{2} (\alpha - D - 1) & \text{if } \alpha \le 5D -1 .
\end{cases} \]
\end{lemma}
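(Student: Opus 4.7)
The plan is to recognize this as a specialization of Lemma~\ref{lemma:lieb-robinson-comparison} (Comparing different dynamics) with $\mathcal L_1 = \mathcal L$ and $\mathcal L_2 = \mathcal L_{A(r)}$. By the definition of the truncated generator,
\[\mathcal L - \mathcal L_{A(r)} \;=\; \sum_{(u,s):\ b_u(s) \not\subseteq A(r)} \mathcal L_u(s),\]
and each term is itself in Lindblad form, hence trace-preserving, so in the Heisenberg picture it vanishes on $\identity$ as required by the comparison lemma. By the triangle inequality, a term centred at $u$ with range $s$ satisfies $b_u(s) \subseteq A(s + \dist(u,A))$, so it contributes only when $s + \dist(u,A) \ge r$, and its support then sits at distance $\max(0, \dist(u,A)-s)$ from $A$. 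This is the only lever the Lieb--Robinson factor can use.

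First I would regroup the omitted terms by that distance to $A$, writing $\mathcal L - \mathcal L_{A(r)} = \sum_{j \ge 0} M_j$ with $\dist(A, \supp M_j) \ge j$, and feed the result into Lemma~\ref{lemma:lieb-robinson-comparison}. Since the prefactor $\abs{X}(e^{vt}-vt-1)/v$ already matches the claim, the proof reduces to establishing
\[\sum_{j\ge 0}\norm{M_j}_{1\to 1,cb}\,\nu_\mu^{-1}(j) \;\le\; J\sum_{\substack{(u,s):\\ s + \dist(u,A)\ge r}} f(s)\,\nu_\mu^{-1}\bigl(\max(0, \dist(u,A)-s)\bigr) \;\lesssim\; J\,\nu_\beta^{-1}(r).\]
I would split the inner sum into the regime $s \le r/2$, where $\dist(u,A) \ge r/2$ forces the Lieb--Robinson decay $\nu_\mu^{-1}$ to act at scale $r$, and the complementary regime $s \ge r/2$, where $f(s)$ itself is small while the number of eligible centres is controlled by a volume factor $\abs{A} + c\,s^D$. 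Submultiplicativity of $\nu_\mu$ and the convergence of the sum in \eqref{eq:weak-lieb-robinson-condition} then produce a $\nu_\mu^{-1}(r/2)$ bound from the first regime and a tail of $f$ (against a polynomial in $s$) from the second.

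In the exponential case $\nu_\mu(r) = e^{\mu r}$, $f(r) \le e^{-\mu r}$, both regimes collapse into a single exponential in $r$ once constants are absorbed into a slightly smaller rate $\beta$, and the lemma follows immediately. The real work, and the main obstacle, is the power-law case: with $f(r) \lesssim (1+r)^{-\alpha}$ and $\nu_\mu(r) = (1+r)^\mu$ subject to the Lieb--Robinson constraint $\mu < \alpha - (2D+1)$, I would have to balance simultaneously the ball-volume factor $s^D$, the surface-count factor $k^{D-1}$ of lattice sites at distance $k$ from $A$, and the two decays above. Tracking these exponents through both regimes is exactly what produces the threshold $\alpha = 5D-1$ in the statement: when $\alpha \ge 5D-1$, the $s \ge r/2$ tail dominates and yields $\beta = \alpha - 3D$; when $\alpha \le 5D-1$, the $s \le r/2$ contribution dominates and only the weaker exponent $\beta = (\alpha - D - 1)/2$ survives. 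I expect this bookkeeping to account for essentially all the work in the proof; everything else is a rewriting of the Comparison Lemma.
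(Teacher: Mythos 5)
Your proposal is correct and follows essentially the same route as the paper: reduce to Lemma~\ref{lemma:lieb-robinson-comparison} by regrouping the omitted terms $\mathcal{L}_u(s)$ with $b_u(s)\not\subseteq A(r)$ according to their distance $\max(0,\dist(u,A)-s)$ from $A$ (the paper organizes this as the telescoping sum $\sum_{l\ge r}\bigl(\mathcal{L}_{A(l+1)}-\mathcal{L}_{A(l)}\bigr)$ and then bounds the result as a convolution of $q\,\nu_\mu^{-1}$ with $f\,\nu_\mu$ via submultiplicativity, which is the same bookkeeping as your split at $s=r/2$). The only step you assert rather than carry out is the final optimization over $\mu$ subject to $\mu<\alpha-(2D+1)$ that produces the two-case formula for $\beta$, which is precisely the point the paper itself dispatches with ``a simple calculation shows.''
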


\begin{proof}
  First, let us decompose $\mathcal L - \mathcal L_{A(r)}$ as a telescoping sum
  \[ \mathcal L - \mathcal L_{A(r)} = \sum_{l \ge r} \mathcal L_{A(l+1)} - \mathcal L_{A(l)} .\]

  Since each element in the sum is the difference between restrictions on different subsets of the same global Lindbladian,
  it is easy to explicitly write their difference
  \[ \mathcal L_{A(l+1)} - \mathcal L_{A(l)} = \sum_{\delta = 0}^{l+1} \sum_{\dist(u,A) = \delta} \mathcal L_u(l+1 - \delta) .\]

 We group the terms in the sum by their distance from $A$:
 Let
  \[ d = \dist(A, b_u(l+1-\delta)) = \max \{ 0, 2\delta -l -1\} \]
 and
  \begin{align}
    M_0 &= \sum_{l \ge r} \sum_{\delta = 0}^{\frac{l+1}2} \sum_{\dist(u,A) = \delta} \mathcal L_u(l+1-\delta) ;\\
    M_d &= \sum_{l\ge r} \sum_{ \substack{\dist(u,A) = \delta \\ \delta = \frac{l+1+d}{2}}} \mathcal L_u(l+1-\delta) .
  \end{align}
Then, we can write:
 \[ \sum_{d \ge 0} M_d = \mathcal L - \mathcal L_{A(r)} ; \qquad \dist(A, \supp M_d) = d .\]

Applying lemma~\ref{lemma:lieb-robinson-comparison}, we obtain:
  \[ \norm{O_A(t) - O_{A}(r;t)} \le
  \norm{O_A} \abs{A} J \frac{e^{vt} -1-vt}{v} \zeta(r); \]
  where, by denoting $q(l) = \abs{A(l)\setminus A(l-1)}$, $\zeta(r)$ is the following:
  \begin{multline}
  \zeta(r) =
  \frac{1}{J} \sum_{d \ge 0} \norm{M_d}_{1 \to 1, cb} \nu_\mu^{-1}(d) \le \\
   \sum_{l \ge r} \sum_{\delta = 0}^{\frac{l+1}{2}} q(\delta) f(l+1-\delta) +
   \sum_{\delta = \frac{l+1}{2}}^{l+1} q(\delta) f(l+1-\delta) \nu_\mu^{-1}(2\delta - l -1) .
 \end{multline}

  If $\delta \ge (l+1)/2$, since $\nu_\mu(\cdot)$ is submultiplicative, we have:
  \[ \nu_\mu(\delta) \le \nu_\mu( l + 1 -\delta)\nu_\mu(2\delta - l -1).\]
  Otherwise, since $\nu_\mu(\cdot)$ is increasing, we have that $\nu_\mu(\delta) \le \nu_\mu(l+1-\delta)$. Plugging these inequalities in the above sum, we get:
  \[ \zeta(r) \le
  \sum_{l\ge r} \sum_{\delta = 0}^{l+1} \left[ q(\delta) \nu_\mu^{-1} (\delta) \right] \left[ f(l+1 - \delta) \nu_\mu(l+1-\delta) \right] .\]

  Since $f$ satisfies equation~\eqref{eq:weak-lieb-robinson-condition}, which in particular implies
  \[ \sum_{\delta \ge 0} f(\delta) \nu_\mu(\delta) \abs{b_0(\delta)} < \infty ,\]
  then the sequence $f(\delta)\nu_\mu(\delta)$ is decreasing.
  We distinguish two cases:
  If $\nu_\mu$ is of the type~\eqref{eq:lieb-robinson-condition-exp}, then the decay of $f(\delta)\nu_\mu(\delta)$ is exponential.
  Since $q(\delta)$ grows polynomially, $q(\delta) \nu_\mu^{-1}(\delta)$ is exponentially decaying.
  Then, the convolution of the two sequences, which is exactly:
  \[ \sum_{\delta = 0}^{l+1} \left[ q(\delta) \nu_\mu^{-1} (\delta) \right] \left[ f(l+1 - \delta) \nu_\mu(l+1-\delta) \right] \]
  is exponentially decaying too, which implies an exponential decay rate for $\zeta(r)$.
  Thus, there exists some $\beta >0$ such that $\zeta(r) \le \nu_\beta^{-1}(r)$, and this concludes the proof for the case of exponential decay.
  Let us suppose now that $\nu_\mu$ is of type~\eqref{eq:lieb-robinson-condition-power}.
  Then, $f(\delta)\nu_\mu(\delta)$ decays as $(1+\delta)^{\mu-\alpha}$, while $q(\delta) \nu_\mu^{-1}(\delta)$ decays as $(1+\delta)^{D-1-\mu}$.
  This implies\footnote{
    Consider two positive decreasing sequences $(x_n)$ and $(y_n)$.
    Since $0 < p < 1$ implies that $(x+y)^p \le x^p + y^p$, it holds that
   $(x * y)_n^p \le \sum_k x_k^p y_{n-k}^p = (x^p * y^p)_n$.} that their convolution decays as $(1+l)^{-\min(\alpha-\mu, \mu-D+1)}$
 and thus
  \[ \zeta(r) \le c (1+r)^{-\min(\alpha-\mu-1,\mu-D)} = \nu_\beta^{-1}(r) .\]
  Recalling that condition~\eqref{eq:lieb-robinson-condition-power} requires $\mu < \alpha - (2D+1)$, a simple calculation shows that the above decay rate is maximized for
  \[ \mu < \min\left(\alpha-2D-1, \frac{\alpha+D-1}{2}\right),\]
  which gives the claimed formula for $\beta$.
  \qed
\end{proof}

Another specialization of lemma~\ref{lemma:lieb-robinson-comparison}, similar in spirit to the one just presented,
is when we compare the evolution of local observables under $\mathcal L^{A(r)}$ and $\mathcal L^{\overline{A(r)}}$,
as defined in definition~\ref{def:uniform-family}.

\begin{lemma}
\label{lemma:localizing-boundary}
Let $O_A$ be an observable supported on $A\subset \Lambda$. Given $r >0$, it holds that
\begin{equation}
\norm{ {T^*_t}^{\overline{A(r)}}(O_A) - {T^*_t}^{A(r)}(O_A)} \le \norm{O_A}\abs{A}\, J \,\frac{e^{vt} - 1 -vt}{v} \nu_{\beta}^{-1}(r).
\end{equation}

\end{lemma}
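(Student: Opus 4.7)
The first observation is that $\mathcal L^{\overline{A(r)}} - \mathcal L^{A(r)} = \mathcal L^{\partial A(r)}$, since all ``bulk'' terms of $\mathcal M$ supported inside $A(r)$ appear in both generators. By Definition~\ref{defn:boundary-condition}, this difference decomposes as $\sum_{d\ge 1} N_d$, where $\supp N_d \subseteq \partial_d A(r)$ and $\|N_d\|_{1\to 1,cb} \le J\,|\partial_d A(r)|\,f(d)$. Each $N_d$ is a piece of a Lindbladian, hence trace preserving, so its dual annihilates $\identity$ and the hypothesis of Lemma~\ref{lemma:lieb-robinson-comparison} is met.

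Next I would compute the distance from $A$ to $\supp N_d$. If $x \in \partial_d A(r)$, there exists $z \notin A(r)$ with $\dist(x,z)\le d$, and since $\dist(z,A) > r$, the triangle inequality gives $\dist(x,A) > r-d$. Hence each $N_d$ sits at distance at least $\max(0, r-d)$ from $A$. Applying Lemma~\ref{lemma:lieb-robinson-comparison} with this indexing yields
\[
\|T_t^{\overline{A(r)}\,*}O_A - T_t^{A(r)\,*}O_A\| \;\le\; \|O_A\|\,|A|\,\frac{e^{vt}-vt-1}{v}\,\zeta(r),
\]
where $\zeta(r) = \sum_{d\ge 1}\|N_d\|_{1\to 1,cb}\,\nu_\mu^{-1}(\max(0,r-d))$.

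The remaining task is to bound $\zeta(r) \le \nu_\beta^{-1}(r)$ for a suitable $\beta>0$. Using the boundary-condition strength bound on $\|N_d\|_{1\to1,cb}$ together with the submultiplicativity estimate $\nu_\mu^{-1}(r-d) \le \nu_\mu(d)\,\nu_\mu^{-1}(r)$ (with the trivial extension $\nu_\mu^{-1}(0)=1 \le \nu_\mu(d)\nu_\mu^{-1}(r)$ for $d \ge r$), I obtain
\[
\zeta(r) \;\le\; J\,\nu_\mu^{-1}(r)\,\sum_{d\ge 1}|\partial_d A(r)|\,f(d)\,\nu_\mu(d).
\]
The sum over $d$ is then handled by combining the polynomial growth of $|\partial_d A(r)|$ with the Lieb-Robinson summability condition~\eqref{eq:weak-lieb-robinson-condition}, which ensures that $f(d)\,\nu_\mu(d)\,|b_0(d)|$ is summable.

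The main obstacle will be absorbing the $r$-dependent volume factor $|\partial_d A(r)|$ without degrading the claimed decay rate. In the exponential regime this is harmless: a polynomial prefactor in $r$ is absorbed by choosing $\beta$ slightly smaller than $\mu$. In the power-law regime one must split the sum at $d\approx r/2$ and balance the tail of $f$ against the convolution, exactly in the spirit of the optimization performed at the end of the proof of Lemma~\ref{lem:localizing-evolution}; this determines the surviving exponent $\beta$ as a function of $\alpha$, $\mu$ and $D$, and ensures that $\zeta(r)$ falls in the same decay class as the Lieb-Robinson decay of $\mathcal L$, as claimed.
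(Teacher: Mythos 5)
Your proposal is correct and follows essentially the same route as the paper: identify $\mathcal L^{\overline{A(r)}} - \mathcal L^{A(r)} = \mathcal L^{\partial A(r)} = \sum_d N_d$, locate each $N_d$ at distance roughly $r+1-d$ from $A$, and feed this into Lemma~\ref{lemma:lieb-robinson-comparison}. The only difference is bookkeeping: the paper regroups the $N_d$ into superoperators $M_j$ indexed by their distance $j$ from $A$ and estimates the resulting convolution, while you keep the index $d$ and pull out $\nu_\mu^{-1}(r)$ by submultiplicativity; your version costs an extra volume factor $\abs{A(r)}\sim r^D$, but since the target exponent $\beta$ from Lemma~\ref{lem:localizing-evolution} satisfies $\beta \le \mu - D$, the claimed decay class is still reached.
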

\begin{proof}
Without loss of generality, we consider the case of $A(r)$ being a convex set.
By construction, $\mathcal L^{\overline{A(r)}} - \mathcal L^{A(r)} = \mathcal L^{\partial A(r)}$, and
$\mathcal L^{\partial A(r)} = \sum_{d\ge 1} N_d$, where each $N_d$ acts on sites that are closer than $d$ to the border of $A(r)$. We group these terms by their distance from $A$. Let $k=\frac 12 \diam A$ and set:
\begin{align*}
M_0 &= \sum_{i=0}^{k} N_{r+1+i},\\
M_j &= N_{r+1-j}, \quad j = 1 \dots r.
\end{align*}
It is easy to see that $\dist(A, \supp M_j) = j$.
By applying lemma~\ref{lemma:lieb-robinson-comparison}, we have that:
\[ \norm{ {T^*_t}^{\overline{A(r)}}(O_A) - {T^*_t}^{A(r)}(O_A)} \le \norm{O_A}\abs{A} \frac{e^{vt} - 1 -vt}{v} \sum_{j=0}^{r} \norm{M_j}_{1 \to 1, c.b.} \nu_\mu^{-1}(j) .\]
We are left to prove that the sum appearing on the r.h.s. is fast-decaying in~$r$. From definition~\ref{def:uniform-family}
it follows that for $j > 0$:
\[ \norm{M_j}_{1\to 1,c.b.} \le J \abs{\partial_{r-j} A(r)} f(r+1-j) = J \abs{A(r)\setminus A(j)} f(r+1-j) ,\]
while for $j = 0$:
\[ \norm{M_0}_{1\to 1,c.b.} \le \sum_{i=0}^{k} J \abs{\partial_{r+i} A(r)} f(r+1+i) .\]
Setting $h_{m,n} = \abs{ b_0(m) \setminus b_0(n) }$, we have that:
\begin{equation}
\sum_{j=0}^{r} \norm{M_j}_{1 \to 1, c.b.} \nu_\mu^{-1}(j) \le J \zeta(r),
\end{equation}
where $$\zeta(r) := \sum_{i=0}^k h_{r+k, k-i} f(r+1+i) + \sum_{j=1}^r h_{r+k, k+j} f(r+1-j) \nu_\mu^{-1}(j).$$

An argument similar to the one in the proof of lemma~\ref{lem:localizing-evolution} shows that $\zeta(r)$ is fast-decaying. Indeed, $h_{r+k, k-i} f(r+1+i)$ scales asymptotically as $r^D f(r)$, while $h_{r+k, k+j} f(r+1-j)$ scales as $(r-j)^D f(r+1-j)$. If $\mathcal L$ satisfies~\eqref{eq:lieb-robinson-condition-exp}, then $\zeta(r)$ will be exponentially decaying, with rate $\min(\alpha, \mu) - 1 = \mu - 1$.

If otherwise $\mathcal L$ satisfies~\eqref{eq:lieb-robinson-condition-power}, then $\zeta(r)$ has a polynomial decay,
with rate $\min(\alpha - D, \mu) - 1 = \mu -1$. In both cases, then:
\[ \zeta(r) \le \nu_{\mu-1}^{-1}(r) .\]
Notice that the constant $\beta$ defined in lemma~\ref{lem:localizing-evolution} is smaller than $\mu -1$.
\qed
\end{proof}

\subsection{Local rapid mixing}
The rapid mixing condition implies a local version of mixing that will be a useful tool for the proof of theorem~\ref{thm:stability}. We state its definition here.
\begin{definition}[Local rapid mixing]
  \label{def:local-rapid-mixing}
Take $A \subset \Lambda$, and define the \emph{contraction of $T_t$ relative to $A$} as
\begin{align}
    \eta^A(T_t) &:= \sup_{\substack{\rho \ge 0 \\ \trace \rho\, =1 }}
    \norm{\trace_{A^c} \left[ T_t(\rho) - T_{\phi,t}(\rho) \right]}_1 \nonumber \\
    =&\sup_{\substack{\rho \ge 0 \\ \trace \rho\, =1 }} \sup_{\substack{O_A \in \mathcal A_A \\ \norm{O_A}=1}}
    \trace\left(O_A \left[ T_t(\rho) - T_{\phi,t}(\rho) \right]\right)  \nonumber \\
    =&\sup_{\substack{\rho \ge 0 \\ \trace \rho\, =1 }} \sup_{\substack{O_A \in \mathcal A_A \\ \norm{O_A}=1}}
    \trace\left(\rho \left[ T^*_t(O_A) - T^*_{\phi,t}(O_A) \right]\right).
\end{align}
We say that $\mathcal L$ satisfies \emph{local rapid mixing} if, for each $A \subset \Lambda$, we have that
  \begin{equation}
    \label{eq:local-rapid-mixing}
    \eta^{A}(T_t) \le k(\abs{A}) e^{-\gamma t},
  \end{equation}
where $k(r)$ grows polynomially in $r$, $\gamma >0$ and all the constants appearing above are independent of the system size.
\end{definition}

\begin{remark}
\label{obs:local-vs-global-rm}
It follows from the definition that $\eta^A(T_t) \le \eta^B(T_t)$ whenever $A \subset B$.
 In particular, $\eta^A(T_t) \le \eta(T_t)$.
\end{remark}

Note that, in contrast with definition~\ref{def:global-rapid-mixing}, the quantity $\eta^A(T_t)$ depends on the evolution \emph{on the whole system $\Lambda$}, and not just on the subset $A$. Thus local rapid mixing is a very strong condition: the term $k(r)$ appearing in equation~\eqref{eq:local-rapid-mixing} only depends on the support of $A$, so the local mixing time (i.e.\ the time it takes for the reduced density matrix on the subset $A$ to converge) is required to be independent of system size.

\begin{example}
  A simple dissipative system satisfying definition~\ref{def:local-rapid-mixing} is the tensor product of amplitude damping channels acting (with the same rate) on different qubits. Note that, though it might seem a trivial example, there are interesting dissipative systems of this form: among others, dissipative preparation of graph states \cite{Kastoryano12} can be brought into this form by a non-local unitary rotation (which of course does not change the convergence rates).
\end{example}


\section{Proof of main result}
\label{sec:proof}
We are now ready to prove our main resul, theorem~\ref{thm:stability}. The proof proceeds in three steps. First, we show that the assumptions of theorem~\ref{thm:stability} imply that the fixed points of $\mathcal L^{\overline \Lambda}$ for different $\Lambda$ are locally indistinguishable. Then, we prove that rapid mixing implies local rapid mixing. Finally, we show how local rapid mixing and the uniqueness of the initial fixed point imply the desired stability result.

\subsection{Step 1: closeness of the fixed points}
\label{sec:step-1}
Topological quantum order (TQO), namely the property of certain orthogonal quantum states to be locally indistinguishable from each other, is a widely studied property of ground state subspaces in the Hamiltonian setting. In the dissipative setting on the other hand, where the concept of ground states is no longer applicable, one may define the analogous concept for periodic states of Lindbladians. Below we describe the concept of Local Topological Quantum Order (LTQO)~\cite{Spiros11}, which extends the concept of TQO to the invariant subspace (periodic states) of local restrictions of the global Lindbladian.

We note that, in contrast to the Hamiltonian case, in order to prove the
desired stability result we do not require extra assumptions like LTQO, or
frustration-freeness. Indeed, we show in this section that rapid mixing \emph{implies} LTQO and a property similar to frustration-freeness. These properties will play a role in the proof of stability, via lemma~\ref{lemma:frustrated-ltqo-2}.

\begin{definition}[Local Topological Quantum Order (LTQO)]\label{def:LTQO}
Consider a Lindbladian $\mathcal L$.
Take a convex set $A \subset \Lambda$ and let $A(\ell) = \{ x \in \Lambda | \dist (x,A) \le \ell \}$. Given two states $\rho_i\in \mathcal X_{\mathcal L_{A(\ell)}}$, $i=1,2$, consider their reduced density matrices on $A$:
  \[ \rho_i^A = \trace_{A(\ell)\setminus A} \rho_i , \quad i=1,2 . \]
  We say that $\mathcal L$ has \emph{local topological quantum order} (LTQO) if for each $\ell \ge 0$:
  \begin{equation}
    \label{eq:ltqo}
    \norm{\rho_1^A - \rho_2^A}_1 \le p(\abs{A})\, \Delta_0 (\ell),
  \end{equation}
  where $\Delta_0(\ell)$ is a fast-decaying function, and $p(\cdot)$ is a polynomial.
\end{definition}

As a first step in the proof, we will show that the conditions of theorem~\ref{thm:stability} imply that
the fixed point of $T_t$, the fixed point of $T^{\overline A}_t$ and the periodic points of $T^A_t$ are difficult to distinguish locally,
in the same spirit as the LTQO condition.

\begin{lemma}
\label{lemma:frustrated-ltqo}
Let $ \mathcal L$ be a uniform family satisfying condition~\eqref{eq:non-ti-grm}, and suppose each $\mathcal L^{\overline A}$ has a unique fixed point and no other periodic points. Let $O_A$ be an observable supported on $A\subset \Lambda$, $\rho$ a periodic point of $T_t^{A(s)}$ and $\rho^s_\infty$ the unique fixed point of $T_t^{\overline{A(s)}}$. Then, we have
\begin{equation} \label{eq:frustrated-ltqo}
\abs{ \trace O_A (\rho - \rho^s_\infty)} \le \norm{O_A} \left(\frac{J}{v}\abs{A} + c \abs{A}^\delta \right) \Delta_0(s),
\end{equation}
where $$\Delta_0(s) = (\abs{A(s)}/\abs{A})^{\delta v/(v+\gamma)} \nu_{\beta'}^{-1}(s), \quad \beta' = \beta\gamma/(v+\gamma),$$
with $c, \gamma,\delta$ the constants defined in the rapid mixing condition $RM(\gamma,\delta)$, $\beta$ the rate defined in lemma~\ref{lemma:localizing-boundary} and $v$ the Lieb-Robinson velocity.
\end{lemma}
\begin{proof}
Fix a $t := t(s) \ge 0$, to be determined later. Since $T_t^{A(s)}$ acts on its space of periodic points as a unitary evolution, there exists a periodic point of $\mathcal L^{A(s)}$, $\rho^\prime$, such that $\rho = T_t^{A(s)}(\rho^\prime)$.
Then, by the triangle inequality, we have:
\begin{equation} \abs{ \trace O_A (\rho - \rho^s_\infty) }\le
\abs{ \trace O_A [T_t^{A(s)} - T_t^{\overline{A(s)}}](\rho^\prime)) } +
\abs{ \trace O_A ( T_t^{\overline{A(s)}}(\rho^\prime) - \rho^s_\infty)}.
\end{equation}
The first term is bounded by lemma~\ref{lemma:localizing-boundary}, since $$\trace O_A ( T_t^{A(s)}(\rho^\prime) - T_t^{\overline{A(s)}}(\rho^\prime)) = \trace \rho^\prime ( {T^*_t}^{A(s)}(O_A) - {T^*_t}^{\overline{A(s)}}(O_A))$$ and $$\abs{ \trace \rho^\prime ( {T^*_t}^{A(s)}(O_A) - {T^*_t}^{\overline{A(s)}}(O_A)) } \le \|\rho^\prime\|_1 \|{T^*_t}^{A(s)}(O_A) - {T^*_t}^{\overline{A(s)}}(O_A)\|_\infty.$$ The second term is bounded using the rapid mixing condition on $T_t^{\overline{A(s)}}$. By putting the two bounds together, we obtain
  \[ \abs{ \trace O_A (\rho - \rho^s_\infty) } \le \norm{O_A} \abs{A} \frac{J}{v} e^{vt}\nu_\beta^{-1}(s) + \norm{O_A} c \abs{A(s)}^\delta e^{-\gamma t} .\]
Setting $p(s) = (\abs{A(s)}/\abs{A})^\delta$ and choosing $t(s)$ such that
\[ e^{vt(s)} \nu_\beta^{-1}(s) = p(s) e^{-\gamma t(s)},\]
 we have that $t(s) = \ln (\nu_\beta(s) \cdot p(s))^{1/(v+\gamma)}$.
Under such choice, it holds that
\[ e^{-\gamma t(s)} = (\nu_{\beta}(s) p(s))^{-\gamma/(v+\gamma)} =
\nu^{-1}_{\beta^\prime}(s) p(s)^{-\gamma/(v+\gamma)} ,\]
where $\beta' = \beta\gamma/(v+\gamma)$.
Defining
\[\Delta_0(s) := (\abs{A(s)}/\abs{A})^{\delta v/(v+\gamma)} \nu_{\beta'}^{-1}(s),\]
concludes the proof.
\qed
\end{proof}

\begin{corollary}[LTQO]
\label{thm:lto-from-unique}
Under the assumptions of lemma~\ref{lemma:frustrated-ltqo}, the Lindbladian $\mathcal L^{\overline \Lambda}$ satisfies LTQO (definition~\ref{def:LTQO}) for all $\Lambda$.
\end{corollary}
\begin{proof}
Take $A \subset \Lambda$, and $s \ge 0$. Let $\rho_1$ and $\rho_2$ be two periodic points of $T_t^{A(s)}$. Then, by the triangle inequality, we have that:
\begin{multline*}
 \abs{ \trace O_A (\rho_1 - \rho_2)} \le \abs{ \trace O_A (\rho_1 - \rho^s_\infty)} + \abs{ \trace O_A (\rho^s_\infty - \rho_2) } \le \\
\le 2 \norm{O_A} \left(\frac{J}{v}\abs{A} + c \abs{A}^\delta \right) \Delta_0(s).
\end{multline*}
Since $\|\rho^A_1-\rho^A_2\|_1 = \sup_{\|O_A\|=1} \abs{ \trace O_A (\rho_1 - \rho_2)}$, the result follows immediately.
\qed
\end{proof}

\begin{lemma}
  \label{lemma:frustrated-ltqo-2}
Under the same notation and assumptions of lemma~\ref{lemma:frustrated-ltqo}, we have the following bound for $\rho_\infty$ the unique fixed point of $T_t$:
\begin{equation}
\sup_{\|O_A\|=1}  \abs{ \trace O_A ( \rho_\infty - \rho_\infty^s ) } \le
 \norm{O_A} \left(\frac{J}{v}\abs{A} + c \abs{A}^\delta \right) \Delta_0(s).
\end{equation}
\end{lemma}
\begin{proof}
By the triangle inequality:
\begin{align*}
    \abs{ \trace O_A ( \rho_\infty - \rho_\infty^s) )} \le \abs{ \trace O_A ( \rho_\infty - T_t^{\overline{A(s)}}(\rho_\infty) }
    + \abs{ \trace O_A ( T_t^{\overline{A(s)}}(\rho_\infty) - \rho_\infty^s)}.
\end{align*}
The first term on the right can be bounded using lemmas~\ref{lem:localizing-evolution} and~\ref{lemma:localizing-boundary} along with $T_t(\rho_\infty)=\rho_\infty$:
\begin{eqnarray*}
&\abs{ \trace O_A \big ( T_t(\rho_\infty) - T_t^{\overline{A(s)}}(\rho_\infty) \big) } = \abs{ \trace \rho_\infty (T^*_t(O_A) - {T^*_t}^{\overline{A(s)}}(O_A))} \\
&\le \|\rho_\infty\|_1 \left(\|T^*_t(O_A) - {T^*_t}^{A(s)}(O_A)\|_\infty+\|{T^*_t}^{A(s)}(O_A) - {T^*_t}^{\overline{A(s)}}(O_A)\|_\infty\right) \\ &\le \norm{O_A}\abs{A} \frac{J}{v} e^{vt} \nu_\beta^{-1}(s).
\end{eqnarray*}
The second term is bounded using the rapid mixing condition:
\[ \abs{ \trace O_A \big ( T_t^{\overline{A(s)}}(\rho_\infty) - \rho_\infty^s \big)} \le \norm{O_A} c \abs{A}^\delta p(s) e^{-\gamma t}.\]
By making the same choice of $t=t(s)$ as in lemma~\ref{lemma:frustrated-ltqo}, we get the desired bound.
\qed
\end{proof}
\begin{corollary}[Approximate frustration-freeness]
\label{cor:approx-ff}
Under the same notation and assumptions of lemma~\ref{lemma:frustrated-ltqo}, denote by $\rho_\infty$ the unique fixed point of $T_t$, and by $\rho$ a periodic point of $T_t^{A(s)}$. Then, we have the following bound:
\begin{equation}\label{eq:approx-ff}
\sup_{\|O_A\|=1} \abs{ \trace O_A( \rho_\infty - \rho )} \le
2 \norm{O_A} \left(\frac{J}{v}\abs{A} + c \abs{A}^\delta \right) \Delta_0(s).
\end{equation}
\end{corollary}
\begin{proof}
By the triangle inequality and lemmas~\ref{lemma:frustrated-ltqo} and~\ref{lemma:frustrated-ltqo-2}, we have:
\begin{multline*}
\abs{ \trace O_A ( \rho_\infty - \rho) )} \le \abs{ \trace O_A ( \rho_\infty - \rho_\infty^s) } +  \abs{ \trace O_A ( \rho_\infty^s - \rho) } \le \\
\le 2 \norm{O_A} \left(\frac{J}{v}\abs{A} + c \abs{A}^\delta \right) \Delta_0(s).
\end{multline*}
\qed
\end{proof}

\subsection{Step 2: from global to local rapid mixing}
\label{sec:step-2}
As a second step in the proof, we show that the assumptions on $\mathcal L$ imply local rapid mixing.

\begin{proposition}[From global to local rapid mixing]
\label{prop:global-local-mixing}
  Let $\mathcal L$ be a uniform family of Lindbladians with unique fixed point. Then, if condition~\eqref{eq:non-ti-grm} is satisfied, $\mathcal L$ satisfies local rapid mixing.
\end{proposition}
\begin{proof}
  Let $O_A$ be an observable supported on $A$ with $\norm{O_A} =1 $.
  Denote by $s_0$ the minimum $s \ge 0$ such that $A(s) = \Lambda$.
  Fix $0\le s \le s_0$, and let $B=A(s)$.
  Then, by the triangle inequality, we can bound the norm of $(\du{T_t} - \du{T_{\infty}} )$ as follows:
  \begin{equation}
    \label{eq:ulmt}
    \norm{ (\du{T_t} - \du{T_{\infty}} ) O_A}  \le \norm{ ( \du{T_t} - T_t^{\overline B *} ) O_A}
    + \norm{ (T_{t}^{\overline B *}  -  T_{\infty}^{\overline B *} ) O_A} + \norm{ (T_{\infty}^{\overline B *} - T_{\infty}^{*} ) O_A }.
\end{equation}
We bound the first term on the right using lemmas~\ref{lem:localizing-evolution} and~\ref{lemma:localizing-boundary}:
  \begin{equation}
    \label{eq:ulmt-1}
    \norm{( \du{T_t}  - T_t^{\overline B *} ) O_A}  \le \abs{A} \frac{J}{v} (e^{vt} - 1 -vt) e^{-\beta s} .
  \end{equation}
The second term is bounded by the rapid mixing condition~\eqref{eq:non-ti-grm}, setting $p(s) = (\abs{A(s)}/\abs{A})^\delta$:
  \begin{equation}
    \label{eq:ulmt-2}
    \norm{ (T_{t}^{\overline B *}  -  T_{\infty}^{\overline B *} ) O_A} \le \eta(T_t^{\overline B}) \le c \abs{A}^\delta p(s) e^{-\gamma t}.
  \end{equation}
Finally, the third term is bounded by using lemma~\ref{lemma:frustrated-ltqo-2}:
  \begin{equation}
    \label{eq:ulmt-3}
    \norm{ (T_{\infty}^{\overline B *}  -  T_\infty^* ) O_A} = \abs{ \trace O_A (\rho_\infty^s  - \rho_\infty) }
    \le  \left(\frac J v \abs{A} + c \abs{A}^\delta\right) \Delta_0(s).
  \end{equation}
Substituting bounds~\eqref{eq:ulmt-1}, \eqref{eq:ulmt-2} and~\eqref{eq:ulmt-3} into equation~\eqref{eq:ulmt}, we obtain, for
$0 \le s \le s_0$ and for all $t \ge 0$:
  \[ \eta^A(T_t) \le \frac{J}{v} \abs{A} e^{v t} e^{-\beta s} + c \abs{A}^\delta p(s) e^{-\gamma t} +  \left(\frac J v \abs{A} + c \abs{A}^\delta\right) \Delta_0(s) .\]
We want to show that we can choose $s = s(t) \in [0,s_0]$ in such a way that both $e^{vt} e^{-\beta s}$ and $e^{-t \gamma} \, p(s) $ are exponentially decaying in $t$. Choose $s := s(t) = t(v+\gamma)/\beta$. Since $\Delta_0(s) = (\abs{A(s)}/\abs{A})^{\delta v/(v+\gamma)} \nu_{\beta'}^{-1}(s)$, denoting
\[ \bar p(t) = p \circ s (t) = p( t (v+\gamma)/\beta ) ,\]
we have that
\[ \Delta_0(s(t)) = \bar p( t )^{v/(v+\gamma)} e^{-\gamma t}.\]
Therefore, since $\bar p(t) \ge 1$,
\begin{multline*}
\eta^A(T_t) \le \frac{J}{v} \abs{A} e^{-\gamma t} +
c \abs{A}^\delta \bar p(t) e^{-\gamma t} +
\left(\frac J v \abs{A} + c \abs{A}^\delta \right) \bar p(t)^{v/(v+\gamma)} e^{-\gamma t} \le \\
\le 2 \left(\frac J v \abs{A} + c \abs{A}^\delta \right) \bar p(t) e^{-\gamma t},
\quad \forall t \le \frac{\beta}{v+\gamma} s_0.
\end{multline*}
When $t \ge \beta/(v+\gamma) s_0$, we can simply bound $\eta^A(T_t)$ by $\eta(T_t)$ (see remark~\ref{obs:local-vs-global-rm}), obtaining:
\[ \eta^A(T_t) \le c \abs{A}^\delta p(s_0) e^{-\gamma t} \le c \abs{A}^\delta \bar p(t) e^{-\gamma t}, \quad \forall t \ge \frac{\beta}{v+\gamma} s_0 .\]
This completes the proof.
\qed
\end{proof}

\subsection{Step 3: from local rapid mixing to stability}
\label{sec:step-3}
We now prove that local rapid mixing alone implies stability. This is the last step in the proof of theorem~\ref{thm:stability}, as we already proved in the previous sections that the condition of theorem~\ref{thm:stability} imply local rapid mixing. However, the following result also stands independently: if a system can be shown to satisfy local rapid mixing by other means, it will also be stable. Moreover, the same proof holds if we relax the assumption on
prefactor $k(\abs{A})$ in equation~\eqref{eq:local-rapid-mixing}: a similar (but weaker) stability result will
hold true as long as $\abs{A}$ is independent of system size.

\begin{theorem}
  \label{thm:stability-single-fixed-point}
  Let $\mathcal L$ be a local Lindbladian satisfying local rapid mixing, and having a unique fixed point $\rho_\infty$ such that
  \[ T^*_\phi(O_A) = T^*_\infty(O_A) = \trace(O_A \rho_\infty) \identity .\]
  Then, using the notation of theorem~\ref{thm:stability}, for all observables $O_A$ supported on $A \subset \Lambda$ we have that
  \begin{equation}\label{eq:stability2}
    \norm{T_t^{*}(O_A) - S_t^{*}(O_A)} \le c(\abs{A})\, \norm{O_A} \left( \epsilon + \abs{\Lambda} \nu^{-1}_\eta(d_A) \right),
  \end{equation}
  where $d_A = \dist(A, \Lambda^c)$;
  $\eta$ is positive and independent of $\Lambda$;
  $\nu_\eta^{-1}(d) \le (1+d)^{-D-1}$;
  $c(|A|)$ is independent of $\Lambda$ and $t$, and is bounded by a polynomial in $\abs{A}$.
\end{theorem}

\begin{proof}
  Let $O_0(t)=\du T_t(O_A)$ and $O_1(t)=\du S_t(O_A)$ and write the difference $O_0 - O_1$ using the integral representation from equation~\eqref{eq:integral-rep}:
  \[ O_0(t) - O_1(t) = \int_0^t  \du{S_{t-s}} \du{E} \du{T_s}(O_A) \de s.\]
The triangle inequality implies:
  \[ \norm{O_0(t) - O_1(t)} \le \sum_u \sum_r \int_0^t \norm{ \du{E_{u,r}} O_0(s)} \de s +  \sum_d \int_0^t \norm{ \du{E_{d}} O_0(s)} \de s ,\]
  where we used the fact that $S_t$ is a contraction.

  Fix a $K \in \{ E_{u,r}\}_{u,r} \cup \{ E_d \}_d$, and let $\delta = \dist(A,\supp K)$. We can split the integral at a time $t_0$ (to be fixed later, depending on $\delta$). We bound the first part of the integral with Lieb-Robinson bounds:
  \[ \int_0^{t_0} \norm{\du K O_0(s)} \de s \le \norm{K}_{1\to 1,cb} \norm{O_A} \abs{A}\frac{e^{v t_0} -vt_0 -1}{v \nu_\mu(\delta)} . \]
  Now pick $t_0 = t_0(\delta)$ such that
  \[ \nu^{-1}_\mu(\delta) \frac{e^{v t_0} -v t_0 -1}{v} \le \nu^{-1}_{\mu/2}(\delta) .\]
  We can choose $t_0(\delta) = \frac{\mu}{2} \frac{\log v}{v} \delta = O(\delta)$, for exponentially decaying (or faster) $\nu^{-1}_\mu(\delta)$.

  If $t \le t_0(\delta)$, then we have bounded the entire integral, and we are done. Otherwise, we treat the second part of the integral as follows:
  \begin{align*}
    &\int_{t_0(\delta)}^t \norm{\du{K} O_0(s)} \de s
    = \int_{t_0(\delta)}^t \norm{\du{K} (O_0(s) - \du{T_\infty}(O_A) )} \de s\\
    &\le \norm{K}_{1\to 1,cb}  \norm{O_A} \int_{t_0(\delta)}^\infty \eta^A(T_s) \de s
     \le \norm{K}_{1\to 1,cb}  \norm{O_A} q(\abs{A}) \int_{t_0(\delta)}^\infty e^{-\gamma s} \de s\\
    &= \norm{K}_{1\to 1,cb}  \norm{O_A} k(\abs{A}) \frac{1}{\gamma} e^{- \gamma t_0(\delta)}
  \end{align*}
  where we used
  $\du K \du T_\infty(O_A) = \du K( \trace(\rho_\infty O_A) \identity) = \trace(\rho_\infty O_A) \du K (\identity) = 0$,
  together with the local rapid mixing condition.

  Since $t_0(\delta)$ is linear in $\delta$, we have that:
  \[h(\delta) := e^{-\frac{\mu \delta}{2}} + \frac{1}{\gamma} e^{- \gamma t_0(\delta)} \]
  is exponentially decaying in $\delta$.

  Putting the different bounds together, we obtain:
  \[ \int_0^t \norm{\du{K} O_0(s)} \de s \le \norm{K}_{1\to 1,cb}  \norm{O_A} k_1(\abs{A})  h(\delta) ,\]
  where $k_1(\abs{A}) = \max(k(\abs{A}), \abs{A})$.

  Returning to the sum, we have proven that:
\begin{multline}
\label{eq:main-thm-summable}
    \norm{O_0(t) - O_1(t)} \le
    \epsilon \ k_1(\abs{A}) \norm{O_A} \Big[
    \underbrace{ \sum_u \sum_r e(r) h( \dist(A,b_r(u))) }_{I_1(A;\, e,h)}\\
    + \underbrace{\sum_d \abs{\partial_d \Lambda} e(d) h( \dist(A, \partial_d \Lambda) )}_{I_2(A;\, e ,h)}
    \Big] .
\end{multline}
It suffices to show that $I_1$ and $I_2$ are finite (and independent of system size), and that $I_2$ decays exponentially in $\dist(A,\Lambda^c)$.
Let us decompose the $I_1$ as follows
\begin{align*}
    &I_1(A;\, e,h) = \sum_u \sum_r e(r) h( \dist(A,b_r(u)))  \\
    &= \mspace{-20mu} \sum_{\dist(u,A) = 0} \sum_r e(r) h(0) + \sum_{d > 0}\; \sum_{\dist(u,A) = d} \left( \sum_{r =0}^d e(r) h(d-r) + \mspace{-7mu} \sum_{r=d+1}^\infty e(r) h(0) \right ) \\
    &= h(0) \abs{A} \sum_r e(r) + \sum_{d > 0} q(d) \left( \sum_{r=0}^d e(r) h(d-r) + h(0) \sum_{r=d+1}^\infty e(r) \right),
  \end{align*}
  where $q(d) = \abs{ \{ u : \dist(u,A) = d \} }$ grows polynomially in $d$.

The first term is clearly bounded, since $e(r)$ is summable. Since $e$ and $h$ are both exponentially decaying functions, their discrete convolution $e \star h (d) = \sum_{r=0}^d e(r) h(d-r)$ is also exponentially decaying, and consequently summable against any polynomial. The same holds for $\sum_{r>d} e(r)$. This proves that the second term is also bounded.

On the other hand, we have that
\[ I_2(A;\, e,h) =  \sum_d \abs{\partial_d \Lambda} e(d) h( \dist(A, \partial_d \Lambda) ) \le \abs{\Lambda}( e\star h(d_A) + \sum_{d\ge d_A} e(d)) ,\]
where $d_A = \dist(A, \Lambda^c)$.
We have just proven that $e\star h(d_A)$ and $\sum_{d \ge d_A} e(d)$ are exponentially decaying. This implies that there exists a positive $\eta$ such that
$\nu_\eta^{-1} (d_A)$ upper bounds both.
  Denoting $ c(\abs{A}) = k_1(\abs{A}) I_1(A;\, e,h)$, we have the desired bound.
\qed
\end{proof}
\subsection{Power-law decay}
\label{sec:power-law-decay}
As we stated before, the results and proofs presented above still hold when $\mathcal L$ has  quasi-local or power-law interactions. In the latter case, this is only true when certain conditions are met on the decay of $\mathcal L$. In what follows, we highlight the changes one needs to make in the case of power-law decay, in order for the main stability results to hold.

\begin{definition}[Compatibility condition]
\label{def:compatibility}
Let $\mathcal L$ be a local Lindbladian, and suppose it satisfies~\eqref{eq:lieb-robinson-condition-power} and rapid mixing RM($\gamma$, $\delta$).
Let $\mu$ and $v$ be the Lieb-Robinson constants for $\mathcal L$ defined in assumption~\ref{lieb-rob-assumption}
and $\beta$ the constant defined in lemma~\ref{lem:localizing-evolution}.
Then we say that $\mathcal L$ satisfies the weak compatibility condition for stability, if the following inequality is satisfied.
\begin{equation}
\tag{CC-1}
\label{eq:cc-1}
\beta\gamma -  \delta D v > 0;
\end{equation}
we say that $\mathcal L$ satisfies the strong compatibility condition for stability if
\begin{equation}
\tag{CC-2}
\label{eq:cc-2}
\mu \frac{\beta \gamma - \delta Dv}{\beta (\gamma + v)} > D+2.
\end{equation}
Moreover, if the perturbation $E$, defined in theorem~\ref{thm:stability}, is decaying polynomially and not exponentially, it must satisfy
\begin{equation}
\tag{CC-e}
\label{eq:cc-e}
 \sum_n n^D \sum_{r>n} e(r) < \infty
\end{equation}
for the theorem to hold.
\end{definition}

\begin{remark}
Clearly, the strong version of the compatibility condition implies the weak one.
If $\mathcal L$ has quasi-local interactions, then the (polynomial) decay rate $\alpha$ of the interactions can be chosen to be larger than any fixed value.
Consequently, since $\beta$ and $\mu$ can be taken to be linear in $\alpha$, quasi-local Lindbladians $\mathcal L$ satisfy the strong compatibility condition~\eqref{eq:cc-2}.
\end{remark}

Under the weak compatibility condition, all the results presented in sections~\ref{sec:step-1} and~\ref{sec:step-2} still hold true,
while under the strong compatibility condition also the results presented
in~\ref{sec:step-3} are still valid, and in particular our main result, theorem~\ref{thm:stability-single-fixed-point}.

We will now show this in the cases in which we made explicit use of condition~\eqref{eq:lieb-robinson-condition-exp}, and give the needed modifications to the proofs of
lemma~\ref{lemma:frustrated-ltqo}, proposition~\ref{prop:global-local-mixing} and theorem~\ref{thm:stability-single-fixed-point} in order to make them valid for power-law decaying interactions.

From now on, we proceed under the working hypothesis that $\mathcal L$ satisfies~\eqref{eq:lieb-robinson-condition-power} and that the above compatibility conditions are satisfied.

\begin{proof}[Modifications in the proof of lemma~\ref{lemma:frustrated-ltqo}]
The argument below follows closely the proof of the original lemma, but now one must check that $\Delta_0(s)$ is still decaying.
Recall the definition of $\Delta_0(s)$ from the original proof of lemma~\ref{lemma:frustrated-ltqo}:
\[\Delta_0(s) = (\abs{A(s)}/\abs{A})^{\delta v/(v+\gamma)} \nu_{\beta'}^{-1}(s), \quad \beta' = \beta\gamma/(v+\gamma).\]
Since $\left(\abs{A(s)}/\abs{A}\right)^{\delta v/(v+\gamma )}$ grows as $(1+s)^{\delta D v/(v+\gamma )}$,
we have:
\[ \Delta_0(s) \sim (1+s)^{- \gamma'} ,\]
where $\gamma' = \frac{\beta\gamma - \delta D v}{v+\gamma}$ is positive because of~\eqref{eq:cc-1}.
\qed
\end{proof}

\begin{proof}[Modifications in the proof of proposition~\ref{prop:global-local-mixing}]
Keeping the notation introduced in the original proof of this proposition, we have already shown that, for each $0 \le s \le s_0$:
  \[ \eta^A(T_t) \le \frac{J}{v} \abs{A} e^{v t} \nu^{-1}_{\beta}(s)  + c \abs{A}^\delta p(s) e^{-\gamma t}
  + \left(\frac{J}{v} \abs{A} + c \abs{A}^{\delta} \right) \Delta_0(s). \]
At this point, we can no longer choose $s=s(t)$ to scale linearly in $t$, since the decay $\nu^{-1}_{\beta}(s)$ is polynomial in $s$ and the prefactor $e^{vt}$ would render the bound trivial. Still, we may choose $s = s(t) \in [0,s_0]$ in such a way that the r.h.s. above is exponentially decaying in $t$.

Fix $k >0$ (to be determined later), and consider:
\[ s(t) = e^{kt} -1 ,\]
in such a way that for $t \le \log(1+s_0)/k$, we have:
\[ \bar p(t) = p \circ s(t) =\left(\abs{A(e^{kt}-1)}/\abs{A} \right)^\delta \sim e^{kD\delta t} .\]
Then, the r.h.s. of the desired bound for $\eta^A(T_t)$ contains the following exponentials:
\[ e^{vt} \nu^{-1}_{\beta}(s) = e^{-(\beta k-v)t} ; \quad p(s)e^{-\gamma t} \sim e^{-(\gamma - k D \delta)t} ,\]
and
\[ \Delta_0(s) \sim (1+s)^{- \gamma'} = e^{-k\gamma' t} ,\]
where $$\gamma^\prime = \frac{\beta\gamma - \delta D v}{v+\gamma}$$ is defined in the modified proof of lemma~\ref{lemma:frustrated-ltqo}. We want to show that we can choose $k$ in such a way that all the exponential functions appearing above are decaying, i.e. each exponent is negative for $t > 0$. \eqref{eq:cc-1} implies that $\Delta_0(s)$ is decaying for all $k>0$.
Let
\[ k' = \frac{v + \gamma}{\beta + \delta D} ,\]
such that $\beta k' - v = \gamma - k' D \delta = k' \gamma'$, making all of the above exponents equal to $-(\beta\gamma - \delta D v)/(\beta+\delta D)$ and negative (due to~\eqref{eq:cc-1}), as desired.

When $t \ge \log(1+s_0)/{k'}$, as in the proof for exponentially decaying interactions, we bound $\eta^A(T_t)$ by $\eta(T_t)$ (see remark~\ref{obs:local-vs-global-rm}), thus obtaining:
\[ \eta^A(T_t) \le c \abs{A}^\delta  p(s_0) e^{-\gamma t} \le c \abs{A}^\delta \bar p(t) e^{-\gamma t}
  \sim c \abs{A}^\delta e^{- k' \gamma' t}.\]
\qed
\end{proof}

\begin{proof}[Modifications in the proof of theorem~\ref{thm:stability-single-fixed-point}]
  Following the same steps as in the original proof,
  but now using the constants for the local rapid mixing obtained in the modified proof of proposition~\ref{prop:global-local-mixing},
  we have that, for each $0 \le t_0 \le t$:
  \begin{multline*}
    \int_0^t \norm{\du K O_0(s)} \de s \le
  \norm{K}_{1\to 1, cb} \norm{O_A} \left ( \abs{A}\frac{1}{v} e^{v t_0} \nu_\mu^{-1}(d) + k(\abs{A})  e^{-t_0 \frac{\beta \gamma - \delta D v}{\beta + \delta D}} \right),
  \end{multline*}
  where $d = \dist(A,\supp K)$.

  Let us define $t_0(d) = k \log(1+d)$ for some positive $k$ (to be determined later), and denote
  $h(d) =  \nu_{vk - \mu}(d) + \nu_{-k\frac{\beta \gamma - \delta D v}{\beta + \delta D}}(d)$, such that
  \[ \int_0^t \norm{K O_0(s)} \de s \le \norm{K}_{1\to 1,cb}\norm{O_A}k_1(\abs{A}) h(d),\]
  where $k_1(\abs{A}) = \max(k(\abs{A}), \abs{A}/v)$.
  Then we have that $h$ has a maximum decay rate of
  \[ \mu^\prime = \sup_{k\ge 0} \min\left(\mu - vk, k \frac{\beta \gamma - \delta D v}{\beta + \delta D}\right) .\]
  The optimal choice of $k$ is $k  = \frac{\mu}{\beta} \frac{\beta + \delta D}{v+\gamma}$,
  in such a way that
  $\mu^\prime = \frac{\mu}{\beta} \frac{\beta \gamma - \delta Dv}{v+\gamma}$.
  $\mu^\prime $ is positive because of condition~\eqref{eq:cc-1}.

  Recalling the following definitions from the original proof of theorem~\ref{thm:stability-single-fixed-point}:
  \begin{align*}
  &q(d) = \abs{ \{ u : \dist(u,A) = d \} }, &
  l(d) = \abs{\partial_d \Lambda} e(d), \\
  &x \star y(d) = \sum_{r=0}^d x(r) y(d-r), &
  d_A = \dist(A, \Lambda^c),
  \end{align*}
we need to show that
  \[
  I_1(A;\, e,h) =  h(0) \abs{A} \sum_r e(r) + \sum_{d > 0} q(d) \left( e \star h (d) + h(0) \sum_{r>d} e(r) \right)
  e\star h(d)
  \]
is finite, and that
  \[ I_2(A;\, e,h) =   l \star h(d_A) + \sum_{d\ge d_A} l(d) \le \nu_\eta^{-1}(d_A)\]
  for some positive $\eta$.
Notice that
\[ I_1(A;\, e,h) \le (1+\abs{A})h(0) \sum_d q(d) \sum_{r>d} e(r) + \sum_{d} q(d)\, e\star h(d) .\]
  Since $q(d)$ grows as $(1+d)^{D}$, $\sum_d q(d) \sum_{r\ge d} e(r)$ is finite if
$ \sum_n n^D \sum_{r> n} e(r) < \infty$,
which is condition~\eqref{eq:cc-e}.

  On the other hand, $e \star h$ decays as the slowest of the two functions. Since we have already
  assumed that $\sum_{d} q(d) \sum_{r>d} e(r)$ is finite, we only need to satisfy that $\sum_d q(d) \sum_{r > d}h(r)$ is finite.
For this to happen, it suffices that $\mu^\prime > D+2$, which is condition~\eqref{eq:cc-2}.

In order to bound $I_2(A; e,h)$, note that $l(d) \le \abs{\Lambda} e(d)$, and therefore \[I_2(A; e,h)\le \abs{\Lambda} ( e \star h (d_A) + \sum_{d \ge d_A} e(d) ).\] We have already proven that conditions~\eqref{eq:cc-2} and~\eqref{eq:cc-e} imply that the r.h.s. of the latter bound is decaying polynomially in $d_A$ at least as fast as $\nu_{D+1}^{-1}(d_A)$.

 This concludes the proof.
\qed
\end{proof}

\subsection{Relaxing rapid mixing}
In this section, we will show that, in the case of exponentially decaying interactions~\eqref{eq:lieb-robinson-condition-exp},
the proof of theorem~\ref{thm:stability} still holds if
\begin{equation}
\label{eq:cc-relaxed-mix}
\sum_n n^D \sum_{d>n}
\int_d^\infty \left(1 + \frac{v}{\beta}s - \frac{1}{\beta} \log \gamma(s)\right)^{\delta D} \gamma (s) \de s < \infty.
\end{equation}

We will directly prove proposition~\ref{prop:global-local-mixing},
without the intermediate step of a lemma like~\ref{lemma:frustrated-ltqo}. Nonetheless,
results of that kind can be proven using exactly the same arguments that
we will use in the following proof.
\label{sec:relaxing-rapid-mixing}
\begin{proof}[Proof of proposition~\ref{prop:global-local-mixing}]
Using the same notation as in the original proof, we have that for $0 \le s \le s_0$:
\[
    \norm{ (\du{T_t} - \du{T_{\infty}} ) O_A}  \le
    \norm{ ( \du{T_t} - T_t^{\overline B *} ) O_A}
    + \norm{ (T_{t}^{\overline B *}  -  T_{\infty}^{\overline B *} ) O_A}
    + \norm{ (T_{\infty}^{\overline B *} - T_{\infty}^{*} ) O_A }.
\]

We will bound the first two terms as in the original proof
(using lemma~\ref{lem:localizing-evolution}, lemma~\ref{lemma:localizing-boundary} and equation~\eqref{eq:relaxed-mixing})
while we rewrite the third term as in the proof of lemma~\ref{lemma:frustrated-ltqo}:
\begin{multline*}
  \norm{ (T_{\infty}^{\overline B *}  -  T_\infty^* ) O_A} =
  \abs{ \trace O_A (\rho_\infty^s  - \rho_\infty) }
  \le \\
  \abs{ \trace O_A ( \rho_\infty - T_t^{\overline{A(s)}}(\rho_\infty)) }
  + \abs{ \trace O_A ( T_t^{\overline{A(s)}}(\rho_\infty) - \rho_\infty^s)}
  = \\
  \abs{ \trace O_A (T_t(\rho_\infty) - T_t^{\overline{A(s)}}(\rho_\infty)) }
  + \norm{O_A} \norm{ T_t^{\overline{A(s)}}(\rho_\infty) - \rho_\infty^s}_1
  \le \\
  \norm{ ( \du{T_t} - T_t^{\overline B *} ) O_A}
  + \eta(T_t^{\overline B}) .
\end{multline*}
Thus we have that
\[ \norm{ (\du{T_t} - \du{T_{\infty}} ) O_A}
\le 2 \frac{J}{v} \abs{A} e^{v t} \nu^{-1}_{\beta}(s)  + 2 c \abs{A}^\delta p(s) \gamma (t),
\]
where $p(s) = (\abs{(A(s)}/\abs{A})^\delta \sim (1+s)^{\delta D}$.

We have claimed that the result only holds in the case of
exponentially decaying or faster decay of interaction.
Suppose $\nu_\beta(s) = (1+s)^\beta$ (i.e., if $\mathcal L$ satisfies~\eqref{eq:lieb-robinson-condition-power}).
Defining  $s = s(t)$ as
\[ s(t) = e^{\frac{v}{\beta + \delta D} t} \gamma(t)^{-\frac{1}{\beta + \delta D}} -1 ,\]
then it holds that
\[\frac{e^{vt}}{\gamma(t)}  = p(s(t))  \nu_\beta (s(t)) \quad \forall t \le t_0,\]
where $t_0$ is such that $s(t_0) = s_0$.
Thus
\[ \delta_0(t) := e^{vt} \nu_\beta^{-1}(s) = e^{\frac{\delta D}{\beta + \delta D} vt} \gamma(t)^{\frac{\beta}{\beta + \delta D}} .\]
We have that this last function is decaying in $t$ if
\[ \gamma(t) < e^{- (v \delta D/\beta) t} .\]
This forces $\gamma (t)$ to be exponentially decaying,
and thus there is no possible relaxation of the rapid mixing condition.

On the other hand, if $\nu_\beta(s) = e^{\beta s}$ (i.e., if $\mathcal L$ satisfies~\eqref{eq:lieb-robinson-condition-exp}),
we define
\[ s(t) =  \frac{v}{\beta} t - \frac{1}{\beta} \log \gamma (t) ,\]
such that $e^{vt}\nu_\beta^{-1}(s) = \gamma(t)$
and
\[ \bar p(t) = p \circ s(t) \sim \left(1+ \frac{v}{\beta}t - \frac{1}{\beta}\log \gamma(t)\right)^{\delta D} \]
grows polynomially.

In this case, we have proved that
\[ \norm{ (\du{T_t} - \du{T_{\infty}} ) O_A} \le 2 \left( \frac{J}{v} \abs{A} + c \abs{A}^\delta \right) \bar p(t) \gamma(t) ,\]
and this concludes the proof since equation~\eqref{eq:cc-relaxed-mix} implies that
 $\bar p(t) \gamma(t)$ is decaying in $t$.
\qed
\end{proof}

\begin{proof}[Proof of theorem~\ref{thm:stability-single-fixed-point}]
Following the same steps as in the original proof,
we have that for any $K \in \{ E_{u,r}\}_{u,r} \cup \{ E_d \}_d$, and let $\delta = \dist(A,\supp K)$
\[ \int_0^t \norm{\du{K} O_0(s)} \de s \le \norm{K}_{1\to 1,cb}  \norm{O_A} k_1(\abs{A})  h(\delta) .\]
where $ h(\delta)$ is now
\[ h(\delta) = e^{-\mu \delta/2} +  \int_{\frac{\mu}{2} \frac{\log v}{v} \delta}^\infty \bar p(s) \gamma(s) \de s .\]
We want to show that $h(\delta)$ is decaying fast enough for the r.h.s. of equation~\eqref{eq:main-thm-summable} to be summable.
This is the case (see the proof of theorem~\ref{thm:stability-single-fixed-point} in section~\ref{sec:power-law-decay})
if equation~\eqref{eq:cc-relaxed-mix} holds.
\qed
\end{proof}


\section{Glauber dynamics}\label{sec:glauber-dynamics}

\subsection{Quantum embedding of Glauber dynamics}
As an example of a non-trivial dynamics for which we can now prove stability using our results, we turn to one of the most studied dynamics in classical statistical mechanics: Glauber dynamics, a Markov process that samples thermal states of local (classical) Hamiltonians on lattices. Apart from being an interesting model in itself, it has important applications in Monte-Carlo Markov chain algorithms for numerical many-body physics \cite{Liggett85}. Determining whether Glauber dynamics is stable against noise or errors is therefore an important question and, as far as we are aware, still open (with partial results obtained under the assumption of attractiveness \cite{MR814713}).

In this section, we present a natural embedding of Glauber dynamics into the Linbdlabian setting, showing how this embedded dynamics inherits properties from the classical Markov chain\footnote{
A similar construction was proposed in \cite{2010NJPh...12b5021A}.}.
We will then apply the results of section~\ref{sec:main-result} to prove, in the appropriate regime, stability of Glauber dynamics.

We will consider a lattice spin system over $\Gamma = \Z^D$ or $\Gamma = (\Z/L\Z)^D$, with (classical) configuration space of a single spin a finite set $S$. For simplicity, we will consider the case $S=\{+1,-1\}$. For each $\Lambda \subset \Gamma$, we will denote by $\Omega_\Lambda$ the space of configurations over $\Lambda$, namely $S^\Lambda$. $\Lambda^c$ will denote the complementary of $\Lambda$ in $\Gamma$, namely $\Gamma \setminus \Lambda$.
\begin{definition}
A finite range, translationally-invariant potential $\{J_A\}_{A\subset \Gamma}$ is a family of real functions indexed by the non empty finite subsets of $\Gamma$ satisfying the following properties:
\begin{enumerate}
\item $J_A : \Omega_A \to \R$.
\item For all $A\subset \Gamma$ and all $x \in \Gamma$:
  \[ J_A(\sigma) = J_{A+x}(\eta) \quad \text{if}\quad \sigma(y+x) = \eta(y)\quad \forall y \in A.\]
\item There exists a positive $r >0$ such that $J_A=0$ if $\diam A > r$, called the range of interaction.
\end{enumerate}

Given a finite-range, translationally-invariant potential, we can define a Hamiltonian for each finite lattice $\Lambda \subset \Gamma$ and each \emph{boundary condition} $\tau \in \Omega_{\Lambda^c}$ by
\[ H_\Lambda^\tau (\sigma) = - \sum_{A \cap \Lambda \neq 0} J_A(\sigma \times \tau ) \quad \forall \sigma \in \Omega_\Lambda ,\]
where $\sigma \times \tau$ is the configuration that agrees with $\sigma$ over $\Lambda$ and with $\tau$ over $\Lambda^c$.
For each such Hamiltonian, we define the Gibbs state state as
\[ \mu_\Lambda^\tau (\sigma) = {(Z_\Lambda^\tau)}^{-1} \exp (- H_\Lambda^\tau(\sigma)) ,\] where $Z_\Lambda^\tau$ is a normalizing constant.\footnote{Following \cite{Martinelli97}, in our notation we have incorporated the usual inverse temperature parameter $\beta$ directly into the potential $J$.} The convex hull of the set of Gibbs states over $\Lambda$ will be denoted by $\mathcal G(\Lambda)$:
\[ \mathcal G(\Lambda) = \operatorname{conv} \{ \mu_\Lambda^\tau \,|\, \tau \in \Omega_{\Lambda^c} \} .\]
\end{definition}

\begin{definition}
  The Glauber dynamics for a potential $J$ is the Markov process on $\Omega_\Lambda$ with the following generator:
  \[ (Q_\Lambda f) (\sigma) = \sum_{x \in \Lambda} c_J(x, \sigma) \nabla_x f (\sigma) ,\]
  where $\nabla_x f(\sigma)$ if defined as $f(\sigma^x) - f(\sigma)$, and $\sigma^x$ is the configuration obtained by flipping the spin at position $x$:
  \[ \sigma^x(y) = \begin{cases}
    \sigma(y) &\text{if } x\neq y\\
    -\sigma(x) &\text{if } x = y.
  \end{cases}\]

  The numbers $c_J(x,\sigma)$ are called transition rates and must satisfy the following assumptions:
  \begin{enumerate}
  \item Positivity and boundedness: There exist positive constants $c_m$ and $c_M$ such that:
    \[  0 < c_m \le  c_J(x,\sigma) \le c_M < \infty \quad \forall x,\sigma.\]
  \item Finite range: $c_J(x, \cdot)$ depends only on spin values in $b_r(x)$.
  \item Translational invariance: for all $k \in \Gamma$,\[ c_J(x,\sigma^\prime) = c_J(x+k,\sigma)\quad\text{if}\quad \sigma^\prime(y) = \sigma(y+k)\quad\forall y .\]
  \item Detailed balance: for all $x \in \Gamma$ and all $\sigma$
    \[ \exp\left(-\sum_{A \ni x } J_A(\sigma)\right) c_J(x,\sigma) = c_J(x,\sigma^x)\exp\left(-\sum_{A\ni x}J_A(\sigma^x)\right) .\]
  \end{enumerate}
  These assumptions are sufficient to ensure that $Q_\Lambda$ generates a Markov process which has the Gibbs states over $\Lambda$ as stationary points.
\end{definition}

\begin{definition}
A quantum embedding of the classical Glauber dynamics for a potential $J$ is generated by the following Lindblad operators
\begin{align}
L_{x,\eta} = \sqrt{c_J(x,\eta)} \ketbra{}{\eta^x}{\eta}\otimes \identity, &\quad \forall x \in \Lambda, \forall \eta \in \Omega_{b_x(r)} ; \\
\mathcal L_{x,\eta}(\rho) = L_{x,\eta} \rho \du{L_{x,\eta}} &- \frac 12 \left\{\rho, c_J(x,\eta) \diagstate{\eta} \right\} ; \nonumber \\
\mathcal L_\Lambda (\rho) = \sum_{x\in \Lambda} \sum_{\eta} L_{x,\eta} \rho \du{L_{x,\eta}} - \frac12 \{ \rho, K \}, &\quad K = \sum_{\sigma} \left(\sum_x  c_J(x,\sigma)\right) \diagstate{\sigma} ;
\end{align}
plus a dephasing channel acting independently and uniformly on all sites $x\in \Lambda$:
\begin{equation}
\label{eq:dephasing}
D_{x,0} = \sqrt{\gamma} \ketbra{}{0}{0}, \quad D_{x,1} = \sqrt{\gamma} \ketbra{}{1}{1}, \quad \mathcal D(\rho) =
\sum_{x\in \Lambda} \sum_{i=0,1}D_{x,i} \rho \du D_{x,i} - \abs{\Lambda}\gamma \rho.
\end{equation}
\end{definition}

$\mathcal L_\Lambda$ satisfies translational invariance because the transition rates $c_J$ do,
and it easy to see that this family of Lindbladians is uniform.

\begin{remark}
Take $\ketbra{}{\alpha}{\beta}$ an element of the computational basis, and let $d(\alpha,\beta)$ be the Hamming distance between $\alpha$ and $\beta$. Then it holds that
\[ \mathcal D(\ketbra{}{\alpha}{\beta}) = -\gamma d(\alpha, \beta) \ketbra{}{\alpha}{\beta} .\]
In other words, $\mathcal D$ is a Schur multiplier in the computational basis, represented by $\left(-\gamma d(\alpha,\beta) \right)_{\alpha,\beta}$.

On the other hand, we have that for all $x$:
\begin{equation}
\label{eq:quantum-glauber-action}
 \sum_{\eta \in \Omega_{b_x(r)}} \mathcal L_{x,\eta} (\ketbra{}{\alpha}{\beta}) = \begin{cases}
c_J(x,\alpha) \left( \ketbra{}{\alpha^x}{\beta^x} - \ketbra{}{\alpha}{\beta} \right) & \text{if } \alpha|_{b_x(r)} = \beta|_{b_x(r)} ,\\\\
- \frac 12 \left( c_J(x,\alpha) + c_J(x, \beta) \right) \ketbra{}{\alpha}{\beta} & \text{otherwise}.
\end{cases}
\end{equation}
Since $d(\alpha^x,\beta^x) = d(\alpha,\beta)$, $[ \mathcal D, \sum_{\eta} \mathcal L_{x,\eta} ] = 0$ for all $x \in \Lambda$, and in particular $\mathcal D$ and $\mathcal L_\Lambda$ commute.
\end{remark}

This quantum dissipative system inherits various properties from its classical counterpart.

\begin{definition}
Let $\mu$ be a full-rank positive state. Denote by
\[ \Gamma_\mu(\rho) = \mu^{\frac12} \rho \mu^{\frac12} .\]
We say that $\mathcal L$ is in detailed balance \cite{temme:122201,MR0468989,MR737310,MR1647879} with respect to $\mu$ if $ \Gamma_\mu \circ \mathcal L = \du{\mathcal L} \circ \Gamma_\mu$.
\end{definition}

\begin{proposition}
  Let $\mu_\Lambda^\tau$ be a Gibbs state over $\Lambda$. Then $\mathcal L_\Lambda$ and $\mathcal D$ are in detailed balance with respect to $\mu_\Lambda^\tau$.
\end{proposition}
\begin{proof}
  Note that $\Gamma_{\mu_\Lambda^\tau}$ is a Schur multiplier in the computational basis:
  \[ \Gamma_{\mu_\Lambda^\tau}(\ketbra{}{\eta_1}{\eta_2} )= \mu_\Lambda^\tau(\eta_1)^{\frac12} \mu_\Lambda^\tau(\eta_2)^{\frac12} \ketbra{}{\eta_1}{\eta_2} .\]
  From the detailed balance condition for the transition rates $c_J(x,\sigma)$, it follows that for all $x\in \Lambda$, denoting $\mathcal L_x = \sum_{\eta \in \Omega_{b_x(r)} } \mathcal L_{x,\eta}$,
  \begin{align*}
    \Gamma_{\mu_\Lambda^\tau} &\circ \mathcal L_x \circ \Gamma_{\mu_\Lambda^\tau}^{-1} (\ketbra{}{\eta_1}{\eta_2} ) \\
    &= \delta^x_{\eta_1,\eta_2} \left( c_J(x,\eta_1)\frac{\mu_\Lambda^\tau(\eta_1^x)}{\mu_\Lambda^\tau(\eta_1)}  \right)\ketbra{}{\eta_1^x}{\eta_2^x} -  \frac{ c_J(x,\eta_1) + c_J(x,\eta_2)}{2} \ketbra{}{\eta_1}{\eta_2} \\
    &= \delta^x_{\eta_1,\eta_2} c_J(x,\eta_1^x)\ketbra{}{\eta_1^x}{\eta_2^x} - \frac{ c_J(x,\eta_1) + c_J(x,\eta_2)}{2} \ketbra{}{\eta_1}{\eta_2} \\
    &= \du{\mathcal L_x} (\ketbra{}{\eta_1}{\eta_2} )),
  \end{align*}
  where
  \[ \delta^x_{\eta_1,\eta_2} = \begin{cases} 1 &\text{if } \eta_1|_{b_x(r)} = \eta_2|_{b_x(r)} \\ 0 & \text{otherwise}.
  \end{cases}\]

  To prove detailed balance for $\mathcal D$, note that Schur multipliers commute, thus $[ \mathcal D, \Gamma_\mu ] =0$. This, together with the fact that $\du {\mathcal D} = \mathcal D$, implies that $\mathcal D$ is in detailed balance w.r.t. $\mu_\Lambda^\tau$.
\qed
\end{proof}

The above proposition implies that Gibbs states are stationary states for the quantum Glauber dynamics. Let us prove that there are no other fixed points apart from the classical ones (i.e. states that are diagonal in the computational basis). Clearly, $\mathcal D$ has all classical states as stationary points. We just have to check $\mathcal L_\Lambda$.
\begin{proposition}
  The set of fixed points of $\mathcal L_\Lambda$ is equal to $\mathcal G(\Lambda)$, the set of Gibbs states over $\Lambda$.
\end{proposition}
\begin{proof}
  Let $\rho$ be a fixed point of $\mathcal L_\Lambda$. We want to prove that $\rho$ is diagonal, i.e.\ that it is of the form
  \[ \rho = \sum_{\sigma} p_\sigma \diagstate{\sigma} .\]

  Consider a non-diagonal element $\ketbra{}{\alpha}{\beta}$, and suppose $\alpha(x)\neq\beta(x)$ for some $x \in \Lambda$. Then, from equation~\eqref{eq:quantum-glauber-action}, we have that for all $y\in b_x(r)$,
  \[\mathcal L_y  (\ketbra{}{\alpha}{\beta}) = -\frac12 ( c_J(y,\alpha) + c_J(y,\beta))  \ketbra{}{\alpha}{\beta} .\]
  For $y\nin b_x(r)$, $\mathcal L_y$ is not supported on $x$, and thus cannot change the configuration there. This implies that the evolution cannot change the configurations over the set $\Delta(r)$, where $ \Delta = \{ x \in \Lambda \,|\, \alpha(x) \neq \beta(x) \}$. In turn, this implies that $\mathcal L_{\Delta}$ commutes with $\mathcal L - \mathcal L_{\Delta}$ (since it acts as a Schur multiplier whose entries depend only on the sites in $\Delta(r)$). Finally, this means that
  \begin{align*}
    \norm{ e^{t \mathcal L_\Lambda}  (\ketbra{}{\alpha}{\beta} )}_1
    &\le \norm{e^{t \mathcal L_{\Delta}} (\ketbra{}{\alpha}{\beta} )}_1
    = \exp\left(-t\frac12 \left(\sum_{x\in\Delta} c_J(x,\alpha) + c_J(x,\beta) \right) \right)\\
    &\le \exp \left( - t \frac12 c_m d(\alpha,\beta) \right) \to 0 .
  \end{align*}
  Since the off-diagonal elements are killed, $\rho$ must be of the form $\sum_\sigma p_\sigma \diagstate{\sigma}$. Writing the equation $\mathcal L_\Lambda(\rho) = 0$ we obtain
  \[ \sum_\sigma \sum_x c_J(x,\sigma) p_\sigma \diagstate{\sigma^x} - \sum_\sigma \sum_x c_J(x,\sigma) p_\sigma \diagstate{\sigma} = 0 ,\]
  which implies
  \[ \sum_x c_J(x, \sigma^x) p_{\sigma^x} = \sum_x p_\sigma c_J(x,\sigma) .\]
  The last equation is simply a rewriting of the fact that $(p_\sigma)$ is a stationary distribution for $Q_\Lambda$, that is, it is exactly a Gibbs state on $\Lambda$.
\qed
\end{proof}

Since $\mathcal L_\Lambda$ and $\mathcal L_\Lambda + \mathcal D$ have the same stationary distributions, even locally, all properties that depend just on the structure of the fixed-point sets will be shared by both: this is the case, for example, of frustration freeness (which we will prove next) and LTQO (which will be proved later).

\begin{proposition}
\label{prop:glauber-ff}
$\mathcal L_\Lambda$ (and consequently $\mathcal L_\Lambda +\mathcal D$) is frustration free.
\end{proposition}
\begin{proof}
By the previous proposition, we have that $\mathcal X_{\mathcal L_\Lambda} = \mathcal G(\Lambda)$. We know \cite{Liggett85} that for Gibbs states it holds that
\[ \Delta \subset \Lambda \Rightarrow \mathcal G(\Lambda) \subset \mathcal G(\Delta) ,\]
but this is exactly the frustration-freeness condition for $\mathcal L_\Lambda$.
\qed
\end{proof}

\subsection{Stability of Glauber dynamics}
We want to show that the contraction of the semigroup generated by $\mathcal L_\Lambda + \mathcal D$ can be controlled by the contraction of the classical Glauber dynamics. To fix notation, denote by $\mathcal C : \mathcal A_\Lambda \to \mathcal A_\Lambda$ the projector on the diagonal subspace with respect to the computational basis. $\mathcal C$ is a completely positive, trace preserving map, and it also satisfies $\mathcal C = \lim_{t \to \infty} \exp( t \mathcal D )$. Since $\mathcal L_\Lambda$ commutes with $\mathcal D$, it also commutes with $\mathcal C$. Then we can prove the following:
\begin{lemma}
  \label{lemma:dephasing-noise}
  If $T_t = \exp\left(t (\mathcal L_\Lambda + \mathcal D) \right)$, then
  \begin{equation}
    \eta(T_t) \le \eta(T_t \circ \mathcal C ) + \eta( \exp(t \mathcal D) ) .
  \end{equation}
\end{lemma}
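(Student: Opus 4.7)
The plan is to apply the triangle inequality with $T_t\mathcal{C}(\rho)$ as the intermediate point,
\[
\|T_t(\rho) - T_{\phi,t}(\rho)\|_1 \le \|T_t(\rho) - T_t\mathcal{C}(\rho)\|_1 + \|T_t\mathcal{C}(\rho) - T_{\phi,t}(\rho)\|_1,
\]
and then to recognize, after taking $\frac{1}{2}\sup_\rho$, the two resulting pieces as the two contractions appearing on the right-hand side of the claimed inequality.

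For the first piece I would exploit that $\mathcal{L}_\Lambda$ and $\mathcal{D}$ commute, so $T_t = e^{t\mathcal{L}_\Lambda}\circ e^{t\mathcal{D}}$, together with the fact that $\mathcal{C}=\lim_{s\to\infty}e^{s\mathcal{D}}$ is the fixed-point projector of the dephasing semigroup and hence satisfies $e^{t\mathcal{D}}\circ\mathcal{C}=\mathcal{C}$. This gives $T_t - T_t\mathcal{C} = e^{t\mathcal{L}_\Lambda}\circ(e^{t\mathcal{D}} - \mathcal{C})$, and since $e^{t\mathcal{L}_\Lambda}$ is CPTP and therefore a contraction in trace norm, the first term is bounded by $\|e^{t\mathcal{D}}(\rho) - \mathcal{C}(\rho)\|_1$. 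Because the pure dephasing dynamics has only non-positive real eigenvalues, $\mathcal{C}$ is both the fixed-point and the periodic-point projector for $e^{t\mathcal{D}}$, so after $\frac{1}{2}\sup_\rho$ the right-hand side is by definition $\eta(\exp(t\mathcal{D}))$.

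For the second piece, the key preparatory step is to establish that the periodic subspace of $\mathcal{L}_\Lambda + \mathcal{D}$ coincides with its fixed-point subspace (i.e.\ $T_\phi = T_\infty$) and consists only of operators that are diagonal in the computational basis. This follows from a direct spectral argument: on off-diagonal matrix units $\mathcal{D}$ contributes a strictly negative real shift, and on diagonal operators the generator reduces to the classical Glauber generator, whose non-zero spectrum is strictly real and negative. Consequently $T_\phi$ annihilates the off-diagonal part of every input, so $T_\phi\circ\mathcal{C} = T_\phi$ and hence $T_{\phi,t}\circ\mathcal{C} = T_{\phi,t}$. Inserting this identity rewrites the second term as $\|T_t\mathcal{C}(\rho) - T_{\phi,t}\mathcal{C}(\rho)\|_1$, whose $\frac{1}{2}\sup_\rho$ is by definition $\eta(T_t\circ\mathcal{C})$.

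Summing the two estimates will yield the lemma. The only delicate ingredient is the spectral identification $T_\phi = T_\infty$ with diagonal range, needed to push $\mathcal{C}$ through the periodic projector; once that is in place the rest is an immediate triangle inequality together with the trace-norm contractivity of $e^{t\mathcal{L}_\Lambda}$ and the commutation of $\mathcal{L}_\Lambda$ with $\mathcal{D}$.
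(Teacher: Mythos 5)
Your proposal is correct and follows essentially the same route as the paper: the same splitting $T_t\rho-T_{\phi,t}\rho=\big(T_t\rho-T_t\mathcal C\rho\big)+\big(T_t\mathcal C\rho-T_{\phi,t}\rho\big)$, the same use of $[\mathcal L_\Lambda,\mathcal D]=0$ and trace-norm contractivity of $e^{t\mathcal L_\Lambda}$ to reduce the first piece to $\eta(e^{t\mathcal D})$, and the same identification of the second piece via invariance of the (diagonal) fixed-point set under $\mathcal C$. Your explicit justification that $T_{\phi,t}\circ\mathcal C=T_{\phi,t}$ just spells out what the paper establishes in the preceding proposition on the fixed points of $\mathcal L_\Lambda$.
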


\begin{proof}
  Fix an initial state $\rho$. Then we can write
  \begin{align*}
    \norm{ T_t(\rho) - T_\infty(\rho) }_1
    &\le \norm{ T_t\circ \mathcal C (\rho) - T_\infty(\rho) }_1 + \norm{T_t \circ (1-\mathcal C)(\rho) }_1 \\
    &\le \norm{ T_t\circ \mathcal C (\rho) - T_\infty \circ \mathcal C(\rho) }_1 + \norm{\exp(t\mathcal D)\circ (1-\mathcal C)(\rho)  }_1 \\
    &\le \eta(T_t \circ \mathcal C ) + \eta( \exp(t \mathcal D) ),
  \end{align*}
  where we have used the fact that $\mathcal L_\Lambda$ and $\mathcal D$ commute, and that the fixed points of $\mathcal L_\Lambda$ are invariant under $\mathcal C$.
\qed
\end{proof}

We know, because of theorem~\ref{thm:commuting-liouvillians}, that
\begin{equation}
  \eta(\exp(t\mathcal D)) \le \abs{\Lambda} e^{-\frac{\gamma}{2} t} ,
\end{equation}
and this implies the following  result.
\begin{corollary}
  If the classical Glauber dynamics satisfies rapid mixing, then also the quantum embedded Glauber dynamics generated by $\mathcal L_\Lambda + \mathcal D$ does.
\end{corollary}

\begin{remark}
  Convergence rates of classical Glauber dynamics are a well studied subject. It is known that, in some regimes, classical Glauber dynamics satisfies a Log Sobolev inequality with system-size independent Log Sobolev constant (for a review on the subject see \cite{Martinelli97}). In such situations the classical chain has a logarithmic mixing time, and thus satisfies rapid mixing.
\end{remark}

For this class of classical dynamical systems it is possible to apply our main result~\ref{thm:stability}. In particular, we can arbitrary perturb the transition rates $c_J(x,\sigma)$ by some $e(x,\sigma)$, not necessary preserving detailed balance. If we denote by $\mathcal E$ the maximum of $\abs{e(x,\sigma)}$, the difference between the perturbed and the original evolution of local observables can be bounded by $\mathcal E$ times a factor depending on the size of the support of the observables taken into account.

\begin{theorem}
\label{thm:stability-glauber}
Let $Q_\Lambda$ the generator of a classical Glauber dynamics, having a unique fixed point and satisfying a Log Sobolev inequality with constant independent of system size. Let $E$ be the generator of another classical Markov process of the form
\[ (E f)(\sigma) = \sum_{x \in \Lambda} e(x,\sigma) \nabla_x f(\sigma) .\]
Suppose that $\mathcal E = \sup_{x,\sigma} \abs{e(x,\sigma)} < \infty $ and that $e(x,\cdot)$ has support bounded uniformly in $x$. Denote by $T_t$ the evolution generated by $Q_\Lambda$ and by $S_t$ the evolution generated by $Q_\Lambda + E$. Then, for each function $f$ supported on $A \subset \Lambda$, it holds that
\[ \norm{T_t(f) - S_t(f)}_\infty \le c(\abs{A}) \norm{f}_\infty \mathcal E ,\]
for some $c(\cdot)$ independent of system size and polynomially growing.
\end{theorem}

\begin{remark}
It is known \cite{martinelli-2d,Lubetzky-Sly} that the Ising model on $\Z^2$ or $(\Z/n\Z)^2$ has a system size independent Log Sobolev constant for high temperatures (when the inverse temperature $\beta$ is lower than the critical value $\beta_c$), or at any temperature in presence of an external magnetic field. In this regime the Glauber dynamics sampling the Ising model is stable (in the sense of theorem~\ref{thm:stability}).
\end{remark}

\subsection{Weak mixing and LTQO}
As a nice observation, though not necessary to prove theorem~\ref{thm:stability-glauber},
we want show that \emph{weak mixing}, a condition on Gibbs states defined in
\cite{Martinelli97}, is equivalent to the LTQO condition given in section~\ref{sec:proof}. The weak mixing
conditions for two-dimensional systems has been shown \cite{martinelli-2d} to imply $L_2$ convergence of the
corresponding Glauber dynamics.

\begin{definition}
  We say that the Gibbs measures in $\mathcal G(\Lambda)$ satisfy the \emph{weak mixing} condition in $V\subset \Lambda$ if there exist constants $C$ and $m$ such that, for every subset $\Delta \subset V$, the following holds:
  \begin{equation}
    \sup_{\tau, \tau^\prime \in \Omega_{V^c}} \norm{ \mu^\tau_{V,\Delta} - \mu^{\tau^\prime}_{V,\Delta} }_1 \le C \sum_{\substack{ x \in \Delta,\\ y \in \partial_r^+ V}} e^{-m \dist(x,y) },
  \end{equation}
  where $\partial_r^+ V = \{ x \in V^c \,|\, \dist(x,V) \le r \}$ and $\mu^\tau_{V,\Delta} = \trace_{V\setminus \Delta} \mu^\tau_V$.
\end{definition}

\begin{proposition}
  If $\mathcal G(\Lambda)$ satisfies the weak mixing condition for each $V \subset \Lambda$, then $\mathcal L_\Lambda$ (and consequently $\mathcal L_\Lambda +\mathcal D$) satisfies LTQO.
\end{proposition}
\begin{proof}
  Take $A\subset \Lambda$, $\ell \ge 0$, and let $V$ be $A(\ell)$. The weak mixing condition for $V$ implies that there exist constants $C$ and $m$ such that
  \[ \sup_{\tau, \tau^\prime \in \Omega_{V^c}} \norm{ \mu^\tau_{V,A} - \mu^{\tau^\prime}_{V,A} }_1 \le C \sum_{\substack{ x \in A,\\ y \in \partial_r^+ V}} e^{-m \dist(x,y) } \le C e^{-m \ell} \abs{A} \abs{\partial_r^+ A(\ell)} .\]
  This is the LTQO condition with $\Delta_0 (\ell) =  C e^{-m \ell} \abs{A} \abs{\partial_r^+ A(\ell)}$. The bound, proven for states of the form $\mu_V^\tau$, can be extended by convexity to all $\mathcal G(V)$. Let $\eta_0, \eta_1 \in \mathcal G(V)$. By definition, $\eta_0$ and $\eta_1$ are convex combination of states of the form $\mu_V^\tau$, thus we can write
  \[ \eta_0 = \sum_i p_i \mu_V^{\tau_i}, \quad \eta_1 = \sum_j q_j \mu_V^{\sigma_j}, \quad \sum_i p_i = \sum_j q_j = 1; \quad p_i, q_j \ge 0.\]
  Then we have
  \begin{align*}
    \norm{\eta_{0,A} - \eta_{1,A}}_1 &= \norm{ \sum_i p_i \mu_{V,A}^{\tau_i} - \sum_j q_j \mu_{V,A}^{\sigma_j} }_1 \\
    &= \norm{ \sum_i p_i (\sum_j q_j \mu_{V,A}^{\tau_i} ) - \sum_j q_j (\sum_i p_i \mu_{V,A}^{\sigma_j} ) }_1 \\
    &\le \sum_{i,j} p_i q_j \norm{ \mu_{V,A}^{\tau_i} - \mu_{V,A}^{\sigma_j}}_1
    \le \sup_{\tau, \sigma} \norm{ \mu_{V,A}^\tau - \mu_{V,A}^\sigma}_1 .
  \end{align*}
\qed
\end{proof}


\section{Conclusions and open questions}
\label{sec:conclusions}

In the context of local perturbations of local Hamiltonians,
changes in the ground state can be detected by the lack of
smoothness of the expectation value of local observables.
Via the quasi-adiabatic technique \cite{quasi-adiabatic-2},
the regularity of such expectation values can be
related to the study of the effect that the perturbation has on the
spectral gap of the Hamiltonian.
In \cite{Spiros11}, the stability of the spectral gap was shown under the assumptions
of frustration-freeness and local indistinguishability between ground states of local patches
of the original Hamiltonian.

In this paper we have studied a class of open quantum systems
described by local Lindbladian evolutions with unique fixed points,
focusing on the problem of the smoothness of evolution of local observables
in the presence of local perturbations.
Given any initial configuration, the system will converge toward
the fixed point with a certain rate. The slowest rate over all possible
initial configurations defines a mixing property of the Lindbladian,
and we consider how this scales with the system size.
In the case of power-law decay of interactions, we show
that a logarithmic scaling is sufficient for the stability of the evolution of local observables,
while for exponentially decaying and finite range interactions a scaling
at least as fast as a certain polynomial, determined by equation~\eqref{eq:cc-relaxed-mix}, is also sufficient.
Moreover, the same assumptions imply certain properties of the fixed point, such as local topological quantum order.
It should be emphasized that Log Sobolev inequalities provide strong enough convergence-time estimates to
satisfy our assumptions, but that our results also apply more generally.

The most important open question involves state engineering of degenerate topologically ordered states, such as topologically protected quantum codes. For such states, all known preparation maps have a convergence time that is slower than required for our result to apply \cite{2013arXiv1310.1037K}.
It is an interesting question whether it is possible to exploit the very weak requirements in terms of locality
of the boundary condition in our definition of uniform families (see definition~\ref{defn:boundary-condition})
to construct faster mixing maps for which one could prove stability, since logical observables partially
supported on such boundaries are not necessarily localizable in the sense of \cite{2013arXiv1310.1037K}.


\begin{paragraph}{Acknowledgments}
  T.S.C. is supported by a Royal Society University Research fellowship, and was previously supported by a Juan de la Cierva fellowship.
  T.S.C., A.L., and D.P.-G. are supported by Spanish grants MTM2011-26912 and QUITEMAD, and European CHIST-ERA project CQC (funded partially by MINECO grant PRI-PIMCHI-2011-1071).
  A.L. is supported by Spanish Ministerio de Economía y Competividad FPI fellowship BES-2012-052404.
 SM acknowledges funding provided by the Institute for
Quantum Information and Matter, an NSF Physics Frontiers Center with support of the Gordon and Betty Moore
Foundation through Grant \#GBMF1250 and by the AFOSR Grant \#FA8750-12-2-0308.
  The authors would like to thank the hospitality of the Centro de Ciencias Pedro Pascual in Benasque, where part of this work was carried out.
\end{paragraph}

\begin{appendices}
\section{The non-stable example}
\label{sec:non-stable-example}
The following example will satisfy all the conditions of theorem~\ref{thm:stability}, except forming an uniform family, and will be shown to be unstable. Interestingly, the system \emph{is} rapid mixing, showing that without the correct structure with respect to system size scaling, rapid mixing alone is not sufficient to imply stability of local observables. This example is the generalization to dissipative systems of the globally gapped but not locally gapped example in \cite{Spiros11}. We will show that the characteristics of the dynamics are essentially determined by a classical Markov chain embedded into the Lindbladian. For a general review on convergence of Markov chains, see \cite{Levin-Peres}.

\begin{example}
  Consider a chain of $2N$ classical spins, with values in $\{0,1\}$.
  Let us define a generator $Q^{2N}$ of a classical Markov chain over the configuration~space~$\{0,1\}^{2N}$.
  We will define $Q^{2N}$ in a translationally-invariant way as follows:
  \begin{alignat*}{2}
   Q_c &=       \bordermatrix{
     ~ & \ket{10}         & \ket{00}   & \ket{11} & \ket{01} \cr
     \ket{10} & -\frac{2}{3N}  & 0             & 0            & \frac{2}{3N}  \cr
     \ket{00} & 0                    & -1           & 0             & 1 \cr
     \ket{11} & 0                    & 0             & -1          & 1 \cr
     \ket{01} & 0                    & 0             & 0            & 0 \cr
   }
   , \quad &
   Q_r &=       \bordermatrix{
     ~ & \ket{10} & \ket{00}         & \ket{11} & \ket{01} \cr
     \ket{10} & -1          & 0                    & 1            & 0  \cr
     \ket{00} & 0            & -1                  & 0            & 1 \cr
     \ket{11} & 0            & 0                    & 0            & 0 \cr
     \ket{01} & 0            & 0                    & 0            & 0 \cr
   }
   , \\
   Q_l &=       \bordermatrix{
     ~ & \ket{10} & \ket{00}          & \ket{11} & \ket{01} \cr
     \ket{10} & -1          & 1                    & 0            & 0  \cr
     \ket{00} & 0            & 0                    & 0            & 0 \cr
     \ket{11} & 0            & 0                    & -1          & 1 \cr
     \ket{01} & 0            & 0                    & 0            & 0 \cr
   }
   , \quad &
   \delta_0 &= \diagstate{0}, \quad \delta_1 = \diagstate{1}.
 \end{alignat*}
 We then define for each $i=1 \dots N$, a generator matrix $Q_i$ acting on spins $(2i-2, \dots, 2i+1)$ by
 \[ Q_i = \identity \otimes Q_c \otimes \identity + \identity \otimes Q_r \otimes \delta_0 + \delta_1 \otimes Q_l \otimes \identity ;\]
 and $Q^{2N} = \sum_{i=1}^{N} Q_i$.

 The matrix $Q_i$ can only change spins $(2i-1,2i)$: its transition graph restricted to such
 spins is presented in figure~\ref{fig:transition-matrix-eg1}.

  \begin{figure}[H]
   \centering
   \begin{minipage}[c]{0.5\textwidth}
     \begin{tikzpicture}
       \tikzstyle{pair}=[shape=circle,draw=orange!50,fill=orange!20]
       \node[pair] (00) at (0,2) [label=above:$00$] {};
       \node[pair] (01) at (0,0) [label=below:$01$] {};
       \node[pair] (10) at (2,2) [label=above:$10$] {};
       \node[pair] (11) at (2,0) [label=below:$11$] {};
       \draw[->] (00) to node[auto,near start] {}  (01);
       \draw[->] (11) to node[auto,swap,near start] {}  (01);
       \draw[->] (10) to node[auto,pos=0.35] {$\frac{2}{3N}$}  (01);
       \draw[->,bend right,draw=blue] (00) to node[auto,swap] {$\delta_0(2i+1)$}  (01);
       \draw[->,bend left,draw=blue] (10) to node[auto] {$\delta_0(2i+1)$}  (11);
       \draw[->,bend right,draw=red] (10) to node[auto,swap] {$\delta_1(2i-2)$}  (00);
       \draw[->,bend left,draw=red] (11) to node[auto] {$\delta_1(2i-2)$}  (01);
     \end{tikzpicture}
   \end{minipage}
   \begin{minipage}[c]{0.4\textwidth}
     \begin{equation*}
       \bordermatrix{
         ~ & \ket{10} & \ket{00}          & \ket{11} & \ket{01} \cr
         \ket{10} & *            & {\color{red} 1} & {\color{blue} 1} & \frac{2}{3N} \cr
         \ket{00} & 0            & *                     & 0                     & 1 + {\color{blue} 1} \cr
         \ket{11} & 0            & 0                    & *            & 1 + {\color{red} 1} \cr
         \ket{01} & 0            & 0                    & 0            & 0 \\
       }
     \end{equation*}
   \end{minipage}
   \caption{The transition matrix for $Q_i$ on the spins $(2i-1,2i)$. The blue and the red transitions are present depending on the nearby sites: the blue ones if there is a $0$ on the right, the red ones if there is a $1$ on the left. Asterisks in the diagonal are such that the sum of each row is zero.}
   \label{fig:transition-matrix-eg1}
 \end{figure}
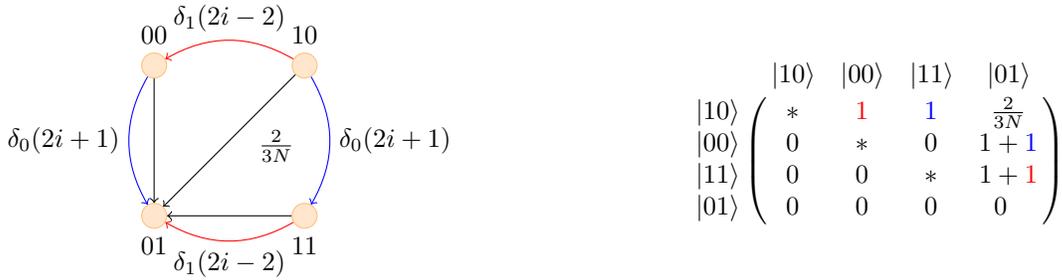

 By construction, $Q^{2N}$ is upper triangular. Thus the elements on the diagonal are the eigenvalues.
 The unique steady state is then $\ket{0101\dots01}$, and the smallest non-zero eigenvalue,
 corresponding to the state $\ket{1010\dots10}$, is $\frac{2}{3}$.
 Furthermore, it is easy to see that the diameter of the graph of the transitions of $Q^{2N}$ is $N$,
 and in turn this implies that the mixing time for $ Q^{2N}$ is of order $O(\log N)$\footnote{
   This can be seen from the upper triangular form of $Q^{2N}$, noticing that the polynomials
   appearing in $e^{t Q^{2N}}$ have degree of at most the diameter of the transition graph.}.

 Let us now embed this classical Markov chain into a Lindbad operator, in a similar fashion as we have done in section~\ref{sec:glauber-dynamics}
 with Glauber dynamics. We will consider then a chain of $2N$ qubits, and define the following Lindblad operators:
  if $k$ is odd, then
 \begin{align*}
   L_{k,1} &= \sigma_x^{k+1}  \ketbra k 0 0 \otimes \ketbra {k+1} 0 0  ,\\
   L_{k,2} &= \sigma_x^{k\phantom{+1}}  \ketbra k 1 1 \otimes \ketbra {k+1} 1 1  ,\\
   L_{k,3} &= \sqrt{\frac{2}{3N}} \sigma_x^k \otimes \sigma_x^{k+1}   \ketbra k 1 1 \otimes \ketbra {k+1} 0 0 ;
 \end{align*}
 if $k$ is even, then
 \begin{align*}
   L_{k,1} &= \sigma_x^{k\phantom{+1}}  \ketbra k 0 0 \otimes \ketbra {k+1} 0 0  ,\\
   L_{k,2} &= \sigma_x^{k+1}  \ketbra k 1 1 \otimes \ketbra {k+1} 1 1  ,\\
   L_{k,3} &= 0.
 \end{align*}
 The Lindbladian is then defined translationally-invariantly as
 \[ \mathcal L^{2N} = \sum_{k=1}^{2N} \sum_{i=1}^3 \mathcal L_{k,i} + \mathcal D_k ;\]
 where $\mathcal D_k$ is a dephasing channel acting on site $k$, as in equation~\eqref{eq:dephasing}.
 Since $L_{k,3}$ depends on $N$, the family we have defined is not a uniform family.

 It is easy to see that the action of $\mathcal L^{2N}$ on diagonal states of the form $\diagstate{\alpha}$,
 with $\alpha \in \{0,1\}^{2N}$, is equal to that of $Q^{2N}$ acting on $\alpha$: this is indeed an embedding
 of $Q^{2N}$.

 Then, by a similar argument as in section~\ref{sec:glauber-dynamics}, we can prove that the fixed points of $\mathcal L^{2N}$ are
 exactly the same as those of $Q^{2N}$ (namely, the unique state\linebreak[4] $\diagstate{0101 \dots 01}$), and that the mixing time of $\mathcal L^{2N}$
 is bounded by the sum of the mixing times of $Q^{2N}$ and of $\mathcal D$. Since both of them are mixing in time $O(\log N)$,
 we see that $\mathcal L^{2N}$ satisfies rapid mixing.

 But the system is unstable: if we perturb $\mathcal L^{2N}$ by removing the terms generated by $L_{k,3}$
 (which is a perturbation of order $O(\frac{1}{N})$), the diagonal state $\diagstate{1010 \dots 10}$ becomes a
 stationary state, and it is clearly locally ortogonal from the original one $\diagstate{0101 \dots 01}$.
\end{example}


\end{appendices}

\printbibliography
\end{document}